\newcommand{\col}{\mbox{col}}
\def\calg{{\cal G}}
\def\calo{{\cal O}}
\def\bfy{{\bf y}}
\def\bfm{{\bf m}}
\def\bfM{{\bf M}}
\def\bfY{{\bf Y}}
\def\L2e{{\cal L}_{2e}}
\def\rea{\mathbb{R}}
\def\diag{\mbox{diag}}
\def\adj{\mbox{adj}}
\def\begequarr{\begin{eqnarray}}
\def\endequarr{\end{eqnarray}}
\def\begequarrs{\begin{eqnarray*}}
\def\endequarrs{\end{eqnarray*}}
\def\begarr{\begin{array}}
\def\endarr{\end{array}}
\def\begequ{\begin{equation}}
\def\endequ{\end{equation}}
\def\lab{\label}
\def\begdes{\begin{description}}
\def\enddes{\end{description}}
\def\begenu{\begin{enumerate}}
\def\begite{\begin{itemize}}
\def\endite{\end{itemize}}
\def\endenu{\end{enumerate}}
\def\lef[{\left[\begin{array}}
\def\rig]{\end{array}\right]}
\def\qed{\hfill$\Box \Box \Box$}
\def\begcen{\begin{center}}
\def\endcen{\end{center}}
\def\begrem{\begin{remark}\rm}
\def\endrem{\end{remark}}
\def\call{{\cal L}}
\def\et{\epsilon_t}
\def\diag{\mbox{diag}}
\def\rea{\mathbb{R}}
\newtheorem{assumption}{Assumption}
\newtheorem{proposition}{Proposition}
\newtheorem{lemma}{Lemma}
\newtheorem{remark}{Remark}
\begin{document}

\title{Performance Enhancement of Parameter Estimators via Dynamic Regressor Extension and Mixing}
\author{Stanislav~Aranovskiy$^{1,2}$,~\IEEEmembership{Member,~IEEE,}
        Alexey~Bobtsov$^{2}$,~\IEEEmembership{Senior Member,~IEEE,}
				Romeo~Ortega$^{3}$,~\IEEEmembership{Fellow Member,~IEEE,}
        Anton~Pyrkin$^{2}$,~\IEEEmembership{Member,~IEEE}% <-this % stops a space
\thanks{$^{1}$Stanislav Aranovskiy is with the School of Automation, Hangzhou Dianzi University, Xiasha Higher Education Zone, Hangzhou, Zhejiang 310018, P.R.China}		
\thanks{$^{2}$Stanislav Aranovskiy, Alexey Bobtsov and Anton Pyrkin are with the Department of Control Systems and Informatics, ITMO University, Kronverkskiy av. 49, Saint Petersburg, 197101, Russia :  {\tt\small aranovskiysv@niuitmo.ru}}
\thanks{$^{3}$Romeo Ortega is with the LSS-Supelec, 3, Rue Joliot-Curie, 91192 Gif--sur--Yvette, France : {\tt\small ortega@lss.supelec.fr}}
}

\maketitle
%
%%%%%%%%%%%%%%%%%%%%%%%%%%%%%%%%%%%%%%%%%%%%%%%%%%%%%%%%%%%%%
\begin{abstract}
A new way to design parameter estimators with enhanced performance is proposed in the paper. The procedure consists of two stages, first, the generation of new regression forms via the application of a dynamic operator to the original regression. Second, a suitable mix of these new regressors to obtain the final desired regression form. For classical linear regression forms the procedure yields a new parameter estimator whose convergence is established without the usual  requirement of regressor persistency of excitation.  The technique is also applied to nonlinear regressions with ``partially" monotonic  parameter dependence---giving rise again to estimators with enhanced performance. Simulation results illustrate the  advantages of the proposed procedure in both scenarios. 
\end{abstract}
%%%%%%%%%%%%%%%%%%%%%%%%%%%%%%%%%
\begin{IEEEkeywords}
Estimation, persistent excitation, nonlinear regressor, monotonicity
\end{IEEEkeywords}
%%%%%%%%%%%%%%%%%%%%%%%%%%%
\section{Introduction}
\lab{sec1}
%%%%%%%%%%%%%%%%%%%%%%%%%%%%
% 
A new procedure to design parameter identification schemes is proposed in this article. The procedure, called Dynamic Regressor Extension and Mixing (DREM), consists of two stages, first, the generation of new regression forms via the application of a dynamic operator to the data of the original regression. Second, a suitable mix of these new data to obtain the final desired regression form to which standard parameter estimation techniques are applied.

The DREM procedure is applied in two different scenarios. First, for linear regression systems, it is used to generate a new parameter estimator whose convergence is ensured {\em without a persistency of excitation} (PE) condition on the regressor. It is well known that standard parameter estimation algorithms applied to linear regressions give rise to a linear time--varying system, which is exponentially stable if and only if a certain PE condition is imposed---this fundamental result constitutes one of the main building blocks of identification and adaptive control theories \cite{LJU,SASBOD}. To the best of the authors' knowledge there is no systematic way to conclude asymptotic stability for this system without this assumption, which is rarely verified in applications. Relaxation of the PE condition is a challenging theoretical problem and many research works have been devoted to it in various scenarios, see {\em e.g.}, \cite{ARAetal,CHOJOH,EFIFRA,JAImsc,MISDAR,PANYU, ALCOSP} and references therein. Due to its practical importance research on this topic is of great current interest. 

The second parameter estimation problem studied in this article is when the parameters enter {\em nonlinearly} in the regression form. It is well known that nonlinear parameterizations are inevitable in any realistic practical problem. On the other hand, designing parameter identification algorithms for nonlinearly parameterized regressions is a difficult poorly understood problem. An interesting case that has recently been explored in the literature is when the dependence with respect to the parameters  exhibit some {\em monotonicity} properties; see \cite{LIUTAC,LIUSCL,TYUetal}. Unfortunately, it is often the case that this property holds true {\em only for some} of the functions entering in the regression stymying the application of the proposed techniques. Our second contribution is the use of the DREM technique to ``isolate" the good nonlinearities and be able to exploit the monotonicity to achieve consistent parameter estimation for nonlinearly parameterised regressions with {factorisable} nonlinearities---{\em not imposing PE conditions}.

The remaining of the paper is organized as follows. The  DREM technique is first explained with its application to linear regressions in Section \ref{sec2}. In Section \ref{sec3} DREM is used for nonlinear factorisable regressions with ``partially" monotonic  parameter dependence. In both sections representative simulation examples are presented. Some concluding remarks and future research are given in Section
\ref{sec4}.\\

%%%%%%%%%
\noindent {\bf Notation} For $x \in \rea^n$,  $|x|$ is the Euclidean norm. All functions and mappings in the paper are assumed sufficiently smooth. For functions of scalar argument $g:\rea \to \rea^s$, $g'$ denotes its first  order derivative. For functions $V: \rea^n \to \rea$ we define the operator $\nabla V :=
(\frac{\partial V}{\partial x})^\top$. Also, for mappings $\Phi:\rea^n \times \rea^q \to \rea^n$ we define its (transposed) Jacobian matrix  $\nabla_x \Phi(x,\theta) := [\nabla \Phi_1(x,\theta),\dots, \nabla \Phi_n(x,\theta)]$. For the distinguished element $x_\star \in \rea^n$ and any mapping $F:\rea^n \to \rea^s$ we denote $F_\star :=F(x_\star )$.  
%
%
%%%%%%%%%%%%%%%%%%%%%%%%%%%%
\section{Consistent Estimation for Linear Regressions without PE}
\lab{sec2}
%%%%%%%%%%%%%%%%%%%%%%%%%%%%
%
In this section the DREM technique is applied to classical linear regressions. The main contribution is the removal of the---often overly restrictive---assumption of regressor PE  to ensure parameter convergence.
\subsection{Standard procedure and the PE condition}
\lab{subsec21}
%%%%%%%%%%%%%%%%%%%%%%%%%%%%
%
Consider the basic problem of on--line estimation of the constant parameters of the $q$--dimensional linear regression
\begequ
\lab{y}
y(t)=m^\top(t)\theta,
\endequ
where\footnote{When clear from the context, in the sequel the arguments of the functions are omitted.} $y:\rea_+ \to \rea$ and $m: \rea_+ \to \rea^q$ are known, bounded functions of time and $\theta \in \rea^q$ is the vector of unknown parameters. The standard gradient estimator
\begequ
\lab{parest0}
\dot {\hat \theta}=\Gamma m (y - m^\top \hat \theta),
\endequ
with a positive definite adaptation gain $\Gamma \in \rea^{q \times q}$ yields the error equation
\begequ
\lab{ltvsys}
\dot {\tilde \theta}=-\Gamma m(t) m^\top(t) \tilde \theta,
\endequ
where $\tilde{\theta} := \hat{\theta}-\theta$ are the parameter estimation errors. It is well--known \cite{LJU,SASBOD} that the zero equilibrium of the linear time--varying system \eqref{ltvsys} is (uniformly) exponentially stable if and only if the regressor vector $m$ is PE, that is, if
\begequ
\lab{pe}
\int_t^{t+T} m(s) m^\top(s)ds \geq \delta I_q,
\endequ
for some $T,\delta >0$ and for all $t \geq 0$, which will be denoted as $m(t) \in \mbox{PE}$. If $m(t) \notin \mbox{PE}$, which happens in many practical circumstances, very little can be said about the asymptotic stability of \eqref{ltvsys}, hence about the convergence of the parameter errors to zero.

\begrem
In spite of some erroneous claims \cite{YUART}, it is well known that the PE conditions for the gradient estimator presented above and more general estimators---like (weighted) least squares---exactly coincide \cite{DASHUA}. Since the interest in the paper is to relax the PE condition, attention is restricted to the simple gradient estimator. 
\endrem

\begrem
To simplify the notation it has been assumed above that the measurement signal $y$ is one--dimensional. As will become clear below DREM is applicable also for the vector case.
\endrem 
\subsection{Proposed dynamic regressor extension and mixing procedure}
\lab{subsec22}
%%%%%%%%%%%%%%%%%%%%%%%%%%%%
%
To overcome the limitation imposed by the PE condition the DREM procedure generates $q$ new, one--dimensional, regression models to  independently estimate each of the parameters under conditions on the regressor $m$ that differ from the PE condition \eqref{pe}.

The first step in DREM is to introduce $q-1$ {\em linear, $\call_\infty$--stable} operators $H_i: \call_\infty \to \call_\infty,\;i \in \{1,2,\dots,q-1\}$, whose output, for any bounded input, may be decomposed as
\begequ
\lab{defh}
(\cdot)_{f_i}(t):=[H_i(\cdot)](t) + \epsilon_t,
\endequ
with $\epsilon_t$ is a (generic) exponentially decaying term. For instance, the operators $H_i$ may be simple, exponentially stable {\em LTI filters} of the form
\begequ
\lab{ltifil}
H_i(p)=\frac{\alpha_i}{p + \beta_i},
\endequ
with $p:=\frac{d}{dt}$ and $\alpha_i\ne 0$, $\beta_i>0$; in this case $\et$ accounts for the effect of the initial conditions of the filters. Another option of interest are {\em delay operators}, that is
$$
[H_i(\cdot)](t):=(\cdot)(t-d_i),
$$
where $d_i \in \rea_+$.

Now, we apply these operators to the regressor equation \eqref{y} to get the filtered regression\footnote{To simplify the presentation in the sequel we will neglect the $\epsilon_t$ terms, which will be incorporated in the analysis later.}
\begequ
\lab{yf}
y_{f_i}= m^\top_{f_i}\theta.
\endequ
Piling up the original regressor equation \eqref{y} with the $q-1$ filtered regressors we can construct the extended regressor system
\begequ
\label{YM}
Y_e(t) = M_e(t) \theta,
\endequ
where we defined $Y_e:\rea_+ \to \rea^q$ and $M_e:\rea_+ \to \rea^{q \times q}$ as
\begequ
\lab{yama}
Y_e:=	\begin{bmatrix} y \\ y_{f_1} \\ \vdots \\ y_{f_{q-1}} \end{bmatrix},\;M_e:=\begin{bmatrix} m^\top \\ m^\top_{f_1} \\ \vdots \\ m^\top_{f_{q-1}} \end{bmatrix}.
\endequ
{Note that, because of the $\call_\infty$--stability assumption of $H_i$, $Y_e$ and $M_e$ are bounded.} Premultiplying  \eqref{YM} by the {\em adjunct matrix} of $M_e$ we get $q$ scalar regressors of the form 
\begequ
\lab{scareg}
Y_i(t) = \phi(t) \theta_i
\endequ
with $i \in \bar q:= \{1,2,\dots,q\}$, where we defined the determinant of $M_e$ as
\begequ
\lab{phi}
\phi(t):=\det \{M_e(t)\}.
\endequ
and the vector $Y:\rea_+ \to \rea^q$
\begequ
\lab{Y}
Y(t) := \adj\{M_e(t)\} Y_e(t).
\endequ

The estimation of the parameters $\theta_i$ from the scalar regression form \eqref{scareg} can be easily carried out via 
\begequ
\lab{decest}
\dot{\hat{\theta}}_i = \gamma_i\phi (Y_i - \phi \hat\theta_i) ,\;i \in \bar q,
\endequ
with adaptation gains $\gamma_i>0$. From \eqref{scareg} it is clear that the latter equations are equivalent to
\begequ
\lab{errequ}
\dot{\tilde{\theta}}_i = -\gamma_i \phi^2 \tilde\theta_i,\;i \in \bar q.
\endequ
Solving this simple scalar differential equation we conclude that
\begequ
\lab{equsta}
 \phi(t) \notin \call_2\quad \Longrightarrow \quad \lim_{t\to \infty} \tilde \theta_i(t)=0,
\endequ
with the converse implication also being true.

The derivations  above establish the following proposition.

\begin{proposition}\em
Consider the $q$--dimensional linear regression \eqref{y} where $y:\rea_+ \to \rea$ and $m: \rea_+ \to \rea^q$ are known, bounded functions of time and $\theta \in \rea^q$ is the vector of unknown parameters. Introduce  $q-1$ linear, $\call_\infty$--stable operators $H_i: \call_\infty \to \call_\infty,\;i \in \{1,2,\dots,q-1\}$ verifying \eqref{defh}. Define the vector $Y_e$ and the matrix $M_e$ as given in \eqref{yama}. Consider the estimator \eqref{decest} with $\phi$ and $Y_i$ defined in \eqref{phi} and \eqref{Y}, respectively. The implication \eqref{equsta} holds. 

\qed
\end{proposition}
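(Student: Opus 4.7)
The plan is to follow the constructive derivation already traced in Section~\ref{subsec22}, reinstating the exponentially decaying residuals $\epsilon_t$ that were suppressed for readability and verifying that they do not disturb the conclusion. The argument falls naturally into three stages: build the extended regression, decouple the parameters by multiplying by the adjugate, and solve the resulting scalar error equation.

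For the first stage, linearity of each $H_i$ together with \eqref{defh} gives $y_{f_i}(t) = [H_i(m^\top\theta)](t) + \epsilon_t = m_{f_i}^\top(t)\theta + \epsilon_t$; stacking the original regression with the $q-1$ filtered copies yields $Y_e(t) = M_e(t)\theta + \varepsilon(t)$, where $\varepsilon$ is an exponentially decaying $\rea^q$-valued signal, and $Y_e$, $M_e$ are bounded by $\call_\infty$-stability of the $H_i$. For the second stage, I premultiply by $\adj\{M_e(t)\}$ and use $\adj(A)A = \det(A)I_q$ to obtain, componentwise,
$$Y_i(t) = \phi(t)\theta_i + \delta_i(t),\quad i \in \bar q,$$
where $\delta_i := [\adj\{M_e\}\varepsilon]_i$ still decays exponentially because $\adj\{M_e\}$ is a bounded matrix (each entry being a polynomial in bounded entries of $M_e$).

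For the third stage, substitution into \eqref{decest} and subtraction of $\theta_i$ produces
$$\dot{\tilde{\theta}}_i(t) = -\gamma_i \phi^2(t)\tilde{\theta}_i(t) + \gamma_i \phi(t)\delta_i(t),$$
whose variation-of-constants solution reads
$$\tilde{\theta}_i(t) = e^{-\gamma_i\int_0^t \phi^2(\tau)d\tau} \tilde{\theta}_i(0) + \gamma_i \int_0^t e^{-\gamma_i\int_s^t \phi^2(\tau)d\tau} \phi(s)\delta_i(s)\, ds.$$
The homogeneous term vanishes as $t\to\infty$ exactly when $\int_0^\infty\phi^2 = \infty$, that is, when $\phi\notin\call_2$. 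The forced term is the only genuine technical point, and I expect to handle it by the standard split-the-horizon trick: for any $\eta>0$ pick $T$ large enough that $\int_T^\infty |\phi(s)\delta_i(s)|\,ds < \eta$ (possible since $\delta_i$ decays exponentially and $\phi$ is bounded), so that the $[T,t]$ part of the integral is bounded by $\eta$ uniformly in $t\ge T$, while the $[0,T]$ part is majorised by $e^{-\gamma_i\int_T^t\phi^2}$ times a constant and therefore tends to zero as $t\to\infty$ whenever $\phi\notin\call_2$. Since $\eta$ is arbitrary, $\tilde{\theta}_i(t)\to 0$, which is the implication \eqref{equsta}.
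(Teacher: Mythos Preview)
Your argument is correct and follows essentially the same route as the paper: the text preceding the proposition already carries out the extension, the adjugate mixing via \eqref{adj}, and the solution of the scalar error equation, and the paper explicitly states that ``the derivations above establish the following proposition''. The only difference is that the paper suppresses the $\epsilon_t$ terms during the derivation and then, in the remark immediately after the proposition, delegates the analysis of the perturbed error equation $\dot{\tilde\theta}_i=-\gamma_i\phi^2\tilde\theta_i+\epsilon_t$ to Lemma~1 of \cite{ARAetal}, whereas you carry the exponentially decaying residual through explicitly and supply the split-horizon estimate yourself; both approaches lead to the same conclusion.
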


\begrem
\lab{rem2}
It is important to underscore that for any matrix $A \in \rea^{q \times q}$
\begequ
\lab{adj}
\adj\{A\}A =\det\{A\} I_q,
\endequ 
even if $A$ is {\em not full rank}, \cite{Lancaster}.
\endrem

\begrem
If we take into account the presence of the exponentially decaying terms $\epsilon_t$ in the filtering operations the error equation \eqref{errequ} becomes
$$
\dot{\tilde{\theta}}_i = -\gamma_i \phi^2 \tilde\theta_i + \epsilon_t,\;i \in \bar q.
$$ 
The analysis of this equation, which establishes \eqref{equsta}, may be found in Lemma 1 of \cite{ARAetal}.
\endrem
\subsection{Discussion}
\lab{subsec23}
%%%%%%%%%%%%%%%%%%%%%%%%%%%%
%
Two natural question arise at this point.

\begite
\item[Q1.] Is the condition $\phi(t) \notin \call_2$ weaker than  $m(t) \in \mbox{PE}$? 
\item[Q2.] Given a regressor $m(t) \notin \mbox{PE}$ how to select the operators $H_i$ to enforce the condition $\phi(t) \notin \call_2$? 
\endite

Regarding these questions the following remarks are in order.

\begite
\item[R1.] It is important to recall that \cite{SASBOD}
$$
m(t) \in \mbox{PE} \quad \Longrightarrow \quad m(t) \notin \call_2.
$$
However, the converse is {\em not true} as shown by the example
$$
m(t)=\frac{1}{\sqrt{1+t}},
$$
which is neither square integrable nor PE.
\item[R2.] Consider the regressor $m(t):= [\sin(t) \ \cos(t)]^\top$ and the operator
\[
	H(p) = \frac{c(p+1)}{p^2+p+2},
\]
where $c>0$. Note that for unit frequency the operator $H$ provides zero phase shift and the magnitude gain $c$. Thus $m_{1f}(t)=c \sin(t)$, $m_{2f}(t)=c \cos(t)$ and
\[
	M_{e}(t) = \begin{bmatrix} \sin(t) & \cos(t) \\ c \sin(t) & c \cos(t) \end{bmatrix}.
\]
Obviously, $m(t) \in \mbox{PE}$, but $\det\{M_e(t)\} \equiv 0$ and $\phi(t) \in \call_2$. 
\item[R3.] From definition \eqref{pe} it is clear that the PE condition is a requirement imposed on the {\em minimal} eigenvalue of the matrix as illustrated by the equivalence 
\[
	\begin{aligned}
		\lambda_{\min}\left\{ \int_t^{t+T} m(s) m^\top(s)ds \right\} \geq {\delta} > 0 \quad \\
			\Longleftrightarrow \quad  m(t) \in \mbox{PE},
	\end{aligned}
\] 
where $\lambda_{\min}\{\cdot\}$ denotes the minimal eigenvalue. On the other hand, the condition $\phi(t) \notin \call_2$ is a restriction on {\em all} eigenvalues of the matrix $M_e$. Indeed, this is clear recalling that the determinant of a matrix is the product of all its eigenvalues and that for any two {bounded} signals $a,b:\rea_+ \to \rea$ we have
$$
a(t)b(t) \notin \call_2\quad \Longrightarrow \quad a(t) \notin \call_2 \; \mbox{and} \; b(t) \notin \call_2.
$$
Consequently, a necessary condition for parameter convergence of the estimators  \eqref{decest} is that all eigenvalues of the matrix $M_e$ are not square integrable.
\endite
Although the remarks R1--R3 do not provide a definite answer to the question Q2, they illustrate the fact that the parameter estimation procedure proposed in the paper gives rise to new convergence conditions that radically differ from the standard PE requirement. 

\subsection{An example}
\lab{subsec24}
%%%%%%%%%%%%%%%%%%%%%%%%%%%%
%
To provide some (partial) answers to the question Q2 above let us consider the simplest case of $q=2$ with $m=\col(m_1,m_2)$. In this case
\begequ
\lab{phiexa}
\phi=m_1 m_{2f} - m_{1f} m_2.
\endequ 

The proposition below identifies a class of regressors  $m(t) \notin \mbox{PE}$ but $\phi(t) \notin \call_2$ for the case of $H$ a simple LTI filter. 

\begin{proposition}\em
Define the set of differentiable functions
\[
	\begin{aligned}
		\calg:=\{g:\rea_+ \to \rea\;|\;g(t) \in \call_\infty,\;\dot g(t) \in \call_\infty,\; \dot g(t) \notin \call_2, \\
		\lim_{t\to \infty} g(t)  = \lim_{t\to \infty} \dot g(t) =  0\}
	\end{aligned}
\]
For all $g \in \calg$ the regressor
$$
m(t)=\lef[{c} 1 \\ g + \dot g \rig] \notin \mbox{PE}. 
$$
Let the operator $H$ be defined as
$$
[H(\cdot)](t) = \left[\frac{1}{p + 1}(\cdot)\right](t).
$$
The function $\phi$ defined in \eqref{phiexa} verifies $\phi(t) \notin \call_2$. 
\end{proposition}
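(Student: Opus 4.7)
The plan is to verify the two claims in order: first that $m(t) \notin \mbox{PE}$, and second that the determinant $\phi(t)$ built with the specified filter equals $-\dot g(t)$ up to exponentially decaying transients, which makes $\phi \notin \call_2$ immediate from the hypothesis $\dot g \notin \call_2$.

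For the first claim, I would exploit the one-dimensional PE test: $m \in \mbox{PE}$ would imply $\int_t^{t+T} (v^\top m(s))^2 ds \geq \delta$ for every unit vector $v$ and every $t \geq 0$. Picking $v = \col(0,1)$ and using $\lim_{t\to\infty}[g(t) + \dot g(t)] = 0$ one gets $\int_t^{t+T}(g(s)+\dot g(s))^2 ds \to 0$, which contradicts the PE bound. Note this step needs only $g,\dot g \in \call_\infty$ together with $g,\dot g \to 0$, all of which are part of the definition of $\calg$.

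For the second claim, the key observation is the algebraic identity $(p+1) g = \dot g + g$, so that applying $H(p)=\frac{1}{p+1}$ to the signal $g+\dot g$ returns $g$ modulo the filter's initial-condition transient. More explicitly, solving $\dot y + y = \dot g + g$ with arbitrary $y(0)$ using the variation-of-constants formula gives $y(t)=g(t)+[y(0)-g(0)]e^{-t}$, so $m_{2f}(t) = g(t) + \et$. Similarly, applying $H(p)$ to the constant signal $m_1 \equiv 1$ yields $m_{1f}(t)=1+\et$. Substituting into \eqref{phiexa}:
\begin{equation*}
\phi(t) = 1\cdot\bigl(g(t)+\et\bigr) - \bigl(1+\et\bigr)\bigl(g(t)+\dot g(t)\bigr) = -\dot g(t) + \et,
\end{equation*}
where boundedness of $g$ and $\dot g$ is used to absorb all products of bounded signals with exponentially decaying signals into a single generic $\et$ term.

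Finally, since $\dot g \notin \call_2$ by definition of $\calg$ and $\et \in \call_2$, the sum $-\dot g + \et$ cannot belong to $\call_2$, which establishes $\phi(t) \notin \call_2$. The only non-routine step is the algebraic cancellation in the second claim, and the mild subtlety there is simply being careful about the exponentially decaying transients generated by the filter's initial condition; everything else is standard manipulation, so I do not anticipate a serious obstacle.
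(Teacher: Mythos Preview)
Your proposal is correct and follows essentially the same argument as the paper: the non-PE claim is dispatched via $m_2(t)\to 0$, and the key computation shows $m_{1f}=1+\et$, $m_{2f}=g+\et$, hence $\phi=-\dot g+\et$. The only cosmetic difference is that the paper obtains $m_{2f}=g+\et$ by subtracting the filter ODE $\dot m_{2f}=-m_{2f}+m_2$ from $\dot g=-g+m_2$, whereas you invoke the operator identity $(p+1)g=g+\dot g$ and variation of constants; these are equivalent derivations of the same fact.
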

\begin{proof}
The fact that $m(t) \notin \mbox{PE}$ is obvious because $\lim_{t\to \infty} m_2(t)  =  0$.  Now, we have that $m_{1f}=1+\et$ and from the filter equations  we get
$$
\dot m_{2f}  =  - m_{2f} + m_2.
$$
On the other hand, from the definition of $m$ we have
$$
\dot g = -g  + m_2.
$$
Substracting these two equations we get
$$
\frac{d}{dt}(m_{2f} - g)=-(m_{2f} - g),
$$
consequently $m_{2f} = g + \et$. Replacing these expressions in \eqref{phiexa} yields 
\begequarrs
\phi & = & m_{2f} - (1+\et)m_2 \\
     & = & ( g + \et) - (1+\et)(g + \dot g)\\
     & = & - \dot g +\et,
\endequarrs
where we have used the fact that $g(t) \in \call_\infty$ and $\dot g(t) \in \call_\infty$ to obtain the last equation. This completes the proof. 
\end{proof}

\begrem
\lab{rem3}
An example of a function $g \in \calg$ is
$$
g(t)=\frac{\sin t}{(1+t)^\frac{1}{2}}.
$$
The corresponding regressor is
\begin{equation} \label{eq:reg_m}
m(t)=\lef[{c} 1 \\ \frac{\sin t+\cos t}{(1+t)^\frac{1}{2}}-
\frac{\sin t}{2(1+t)^\frac{3}{2}} \rig]. 
\end{equation}
\endrem

\begrem
The choice of $\alpha=\beta=1$ for the proposed operator $H$ is made without loss of generality, as a similar proposition can be established for any exponentially stable LTI filter. The choice of the filters is, then, a degree of freedom to verify the conditions $\phi(t) \notin \call_2$.
\endrem 
\subsection{Simulation results}
\lab{subsec25}
%%%%%%%%%%%%%%%%%%%%%%%%%%%%
%
We first evaluate the performance of the classical parameters estimator \eqref{parest0} with $m(t)$ given by \eqref{eq:reg_m}. From the analysis of Subsection \ref{subsec21} we know that the LTV system \eqref{ltvsys} is stable, but it is not {\em exponentially} stable since $m(t) \not \in \mbox{PE}$, and PE is a necessary condition for exponential stability. 

The transient behavior of the parameter errors $\tilde \theta(t)$ with $\Gamma=\gamma I_2$ and $\theta=\col(-3,3)$ is shown in Fig. \ref{fig:classic_transients} for $\tilde \theta(0) = \col(3,-3)$, $\gamma=3$ and $\gamma=10$.  It is worth noting that it is not possible to conclude from the simulations whether $\tilde \theta(t)$ converges to zero asymptotically or not. The plots show that convergence has not been achieved even after  a reasonably long period of $500$. The graphs also show that increasing $\gamma$ that, in principle, should speed--up the convergence, makes the situation even worse, cf. Fig. \ref{fig:classic_transients} (a) and (b). In Fig. \ref{fig:classic_IC} we show the integral curves for $\gamma=3$ for initial conditions taken in a disk. If the adaptation gain is taken as $\Gamma=\diag\{\gamma_1,\gamma_2\}$ it is possible to improve the transient performance, but this requires a time--consuming, trial--and--error tuning stage that is always undesirable.
 
\begin{figure}[Htb]
	\centering
\subcaptionbox{$\gamma=3$}{\includegraphics[width=0.45\textwidth]{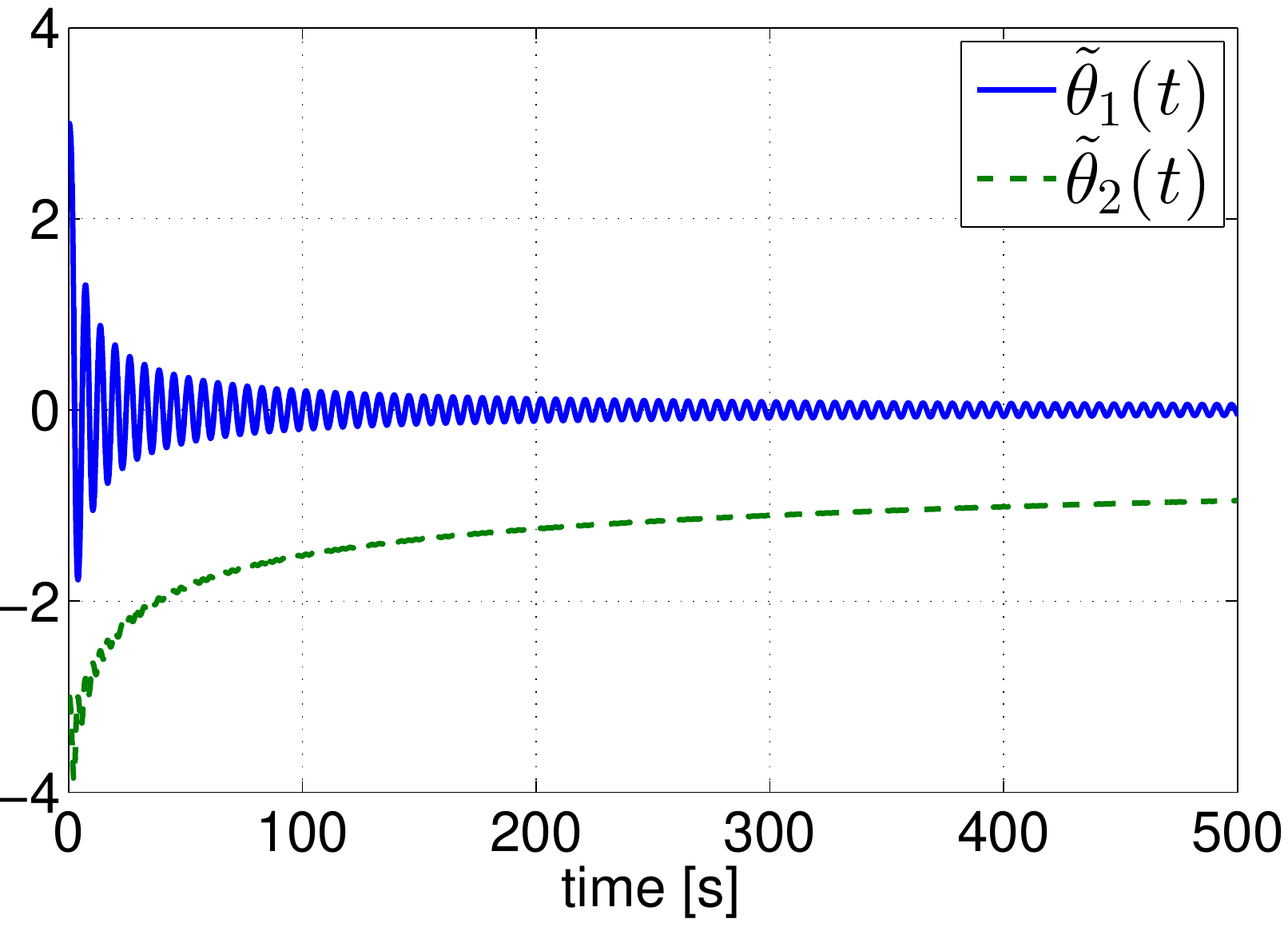}}
\subcaptionbox{$\gamma=10$}{\includegraphics[width=0.45\textwidth]{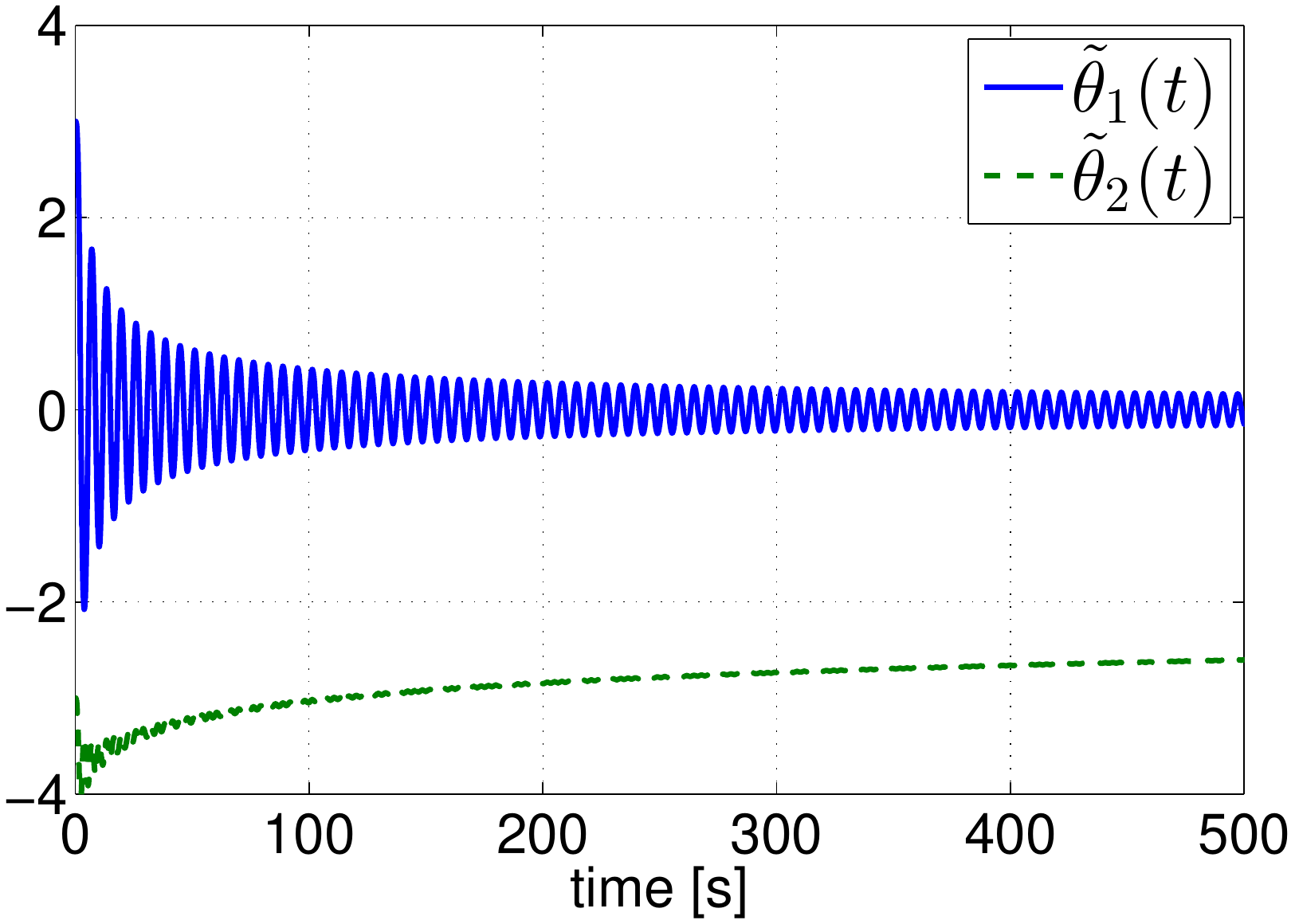}}
	\caption{Transient performance of the parameter errors $\tilde \theta(t)$ for the gradient estimator \eqref{parest0} with $m(t)$ given by \eqref{eq:reg_m}, $\tilde \theta(0) = \col(3, -3)$ and $\Gamma=\gamma I_2$, $\gamma=3,10$.}
	\label{fig:classic_transients}
\end{figure}

\begin{figure}[Htb]
	\centering
	\includegraphics[width=0.45\textwidth]{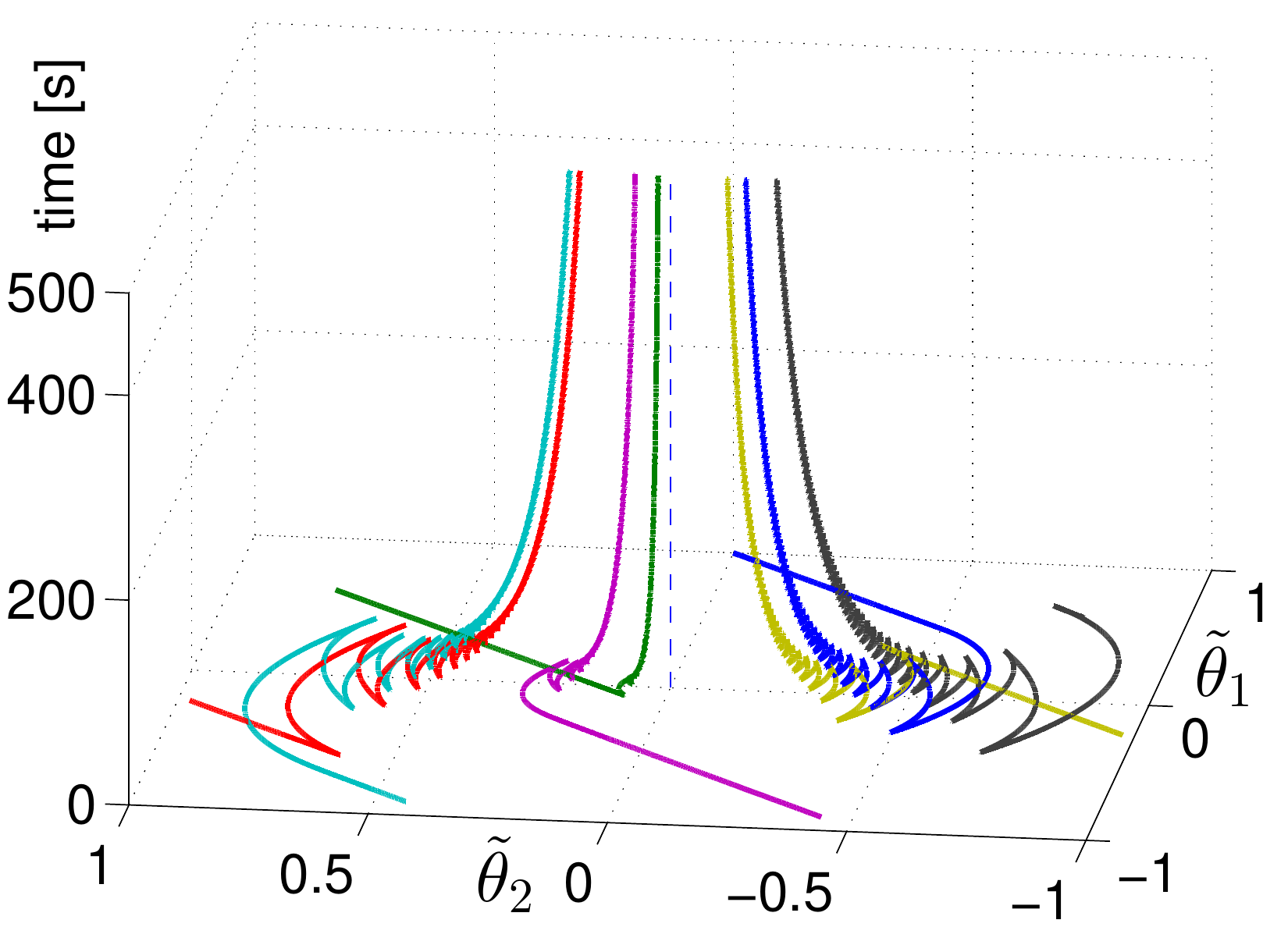}
	\caption{The integral curves of the parameter estimator \eqref{parest0} with $m(t)$ given by \eqref{eq:reg_m}, $\Gamma=3I_2$ and initial conditions taken in a disk.}
	\label{fig:classic_IC}
\end{figure}

Next we study performance of the DREM estimator \eqref{decest} with the same $m(t)$ and $\theta=\col(-3,3)$. The transient behavior of  $\tilde \theta(t)$ is given in Fig. \ref{fig:drgm_transients} for $\tilde \theta(0) = \col(3,-3)$, $\gamma_{1,2}=3$ and $\gamma_{1,2}=10$. The integral curves, for initial conditions taken in a disk, for $\gamma_{1,2}=3$ are given in Fig. \ref{fig:drgm_IC}. The simulations illustrate significant performance improvement both in oscillatory behavior and in convergence speed---notice the difference in time scales. Moreover, tuning of the gains $\gamma_i$ in the DREM estimator is straightforward. For example, given $\int_0^{10}{\phi^2(t)dt} \approx 0.78$, where $\phi$ is defined in \eqref{phiexa}, for $\gamma_i=3$ one obtains 
$$
	\tilde{\theta}_i(10) \approx e^{-3\cdot 0.78}\tilde{\theta}_i(0) \approx 0.09 \, \tilde{\theta}_i(0),
$$
and for $\gamma_i=10$
$$
	\tilde{\theta}_i(10) \approx e^{-10\cdot 0.78}\tilde{\theta}_i(0) \approx 0.0004 \, \tilde{\theta}_i(0),
$$
which coincides with the behaviour observed in Figs. \ref{fig:drgm_transients} (a) and (b). 

\begin{figure}[Htb]
	\centering
	\subcaptionbox{$\gamma_{1,2}=3$}{\includegraphics[width=0.45\textwidth]{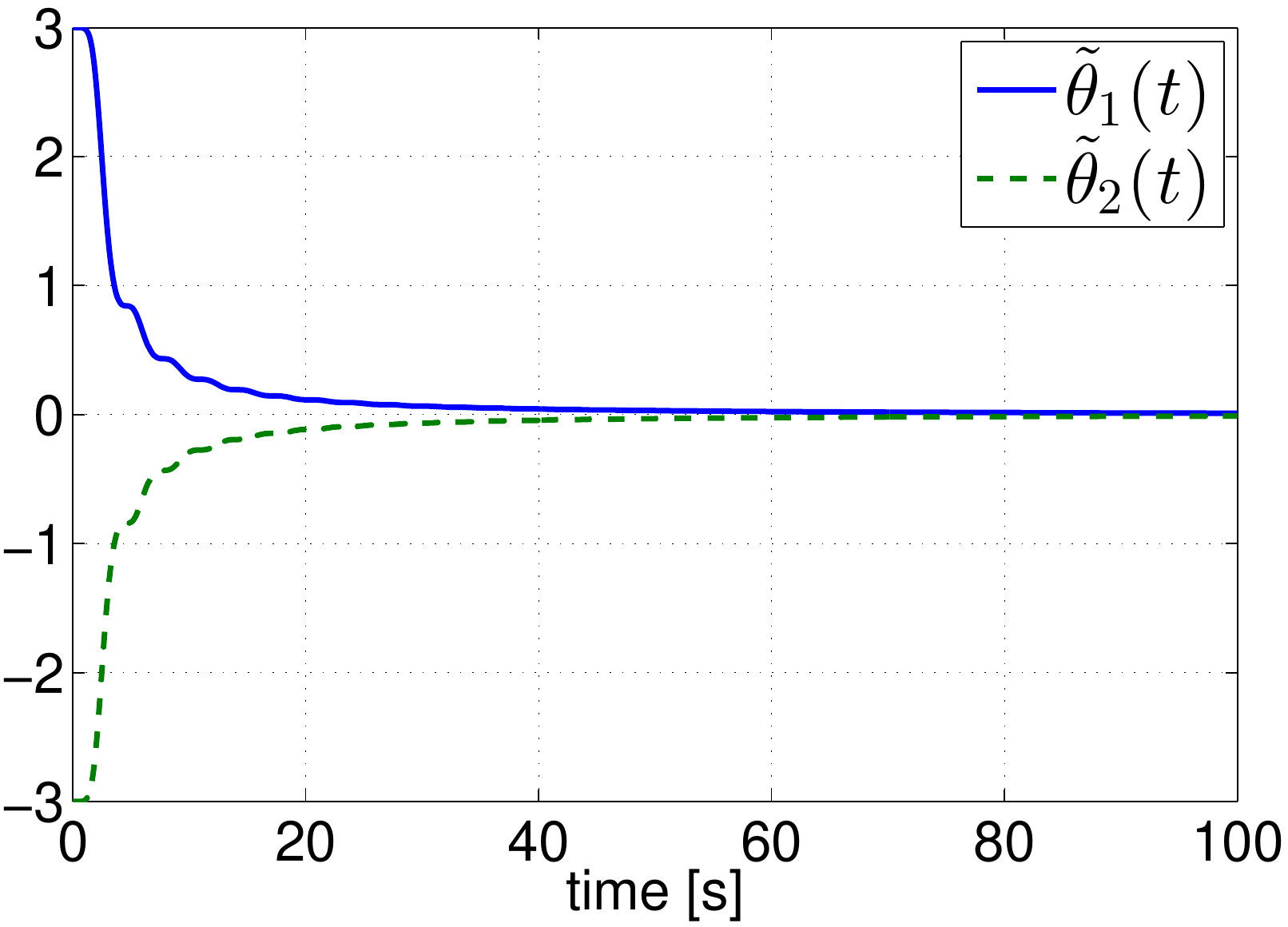}}
	\subcaptionbox{$\gamma_{1,2}=10$}{\includegraphics[width=0.45\textwidth]{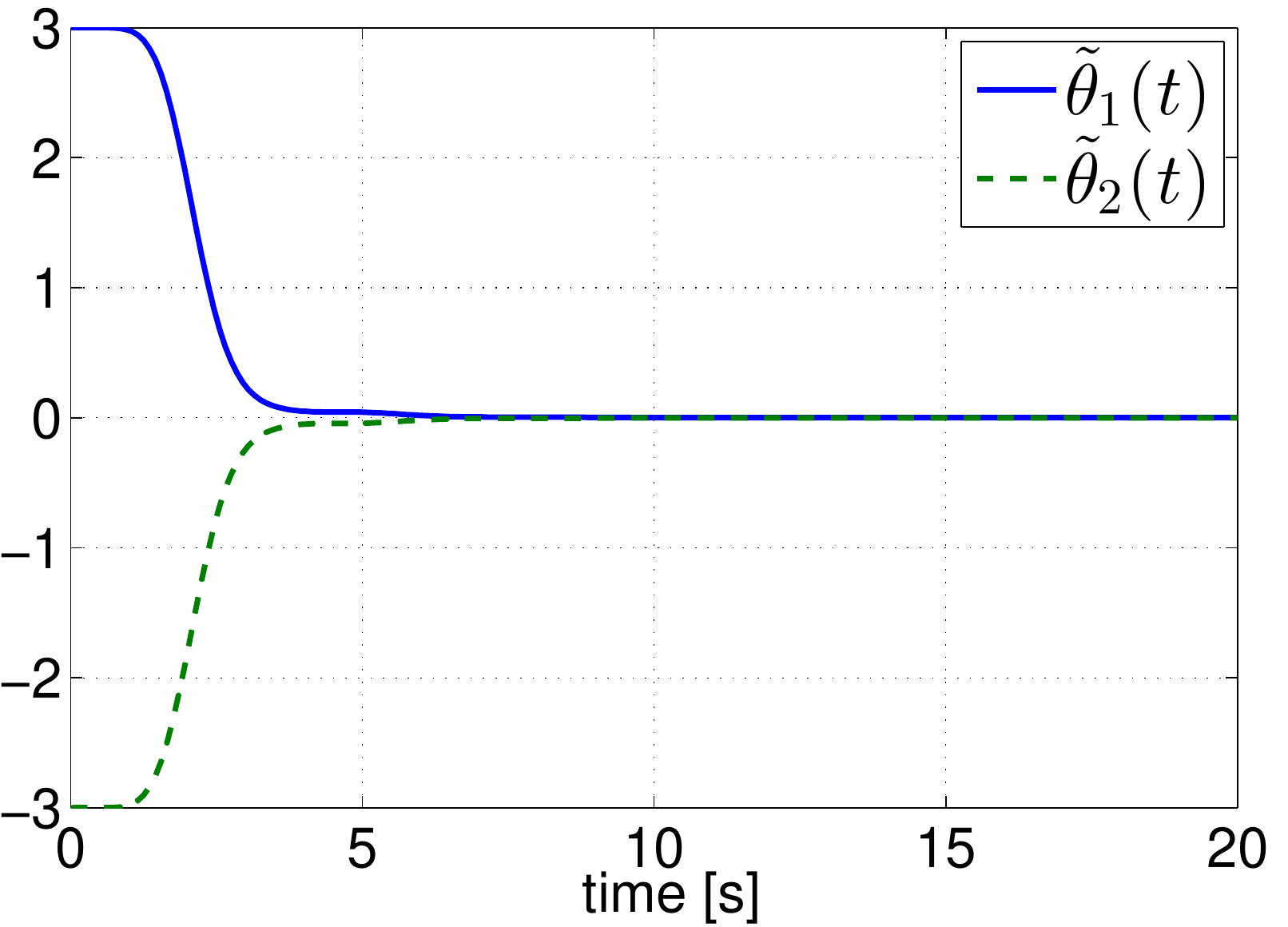}}
	\caption{Transient performance of the parameter errors $\tilde \theta(t)$ for the DREM estimator \eqref{decest} with $m(t)$ given by \eqref{eq:reg_m}, $\tilde \theta(0) = \col(3, -3)$ and $\gamma_{1,2}=3,10$. }
	\label{fig:drgm_transients}
\end{figure}

\begin{figure}[Htb]
	\centering
	\includegraphics[width=0.45\textwidth]{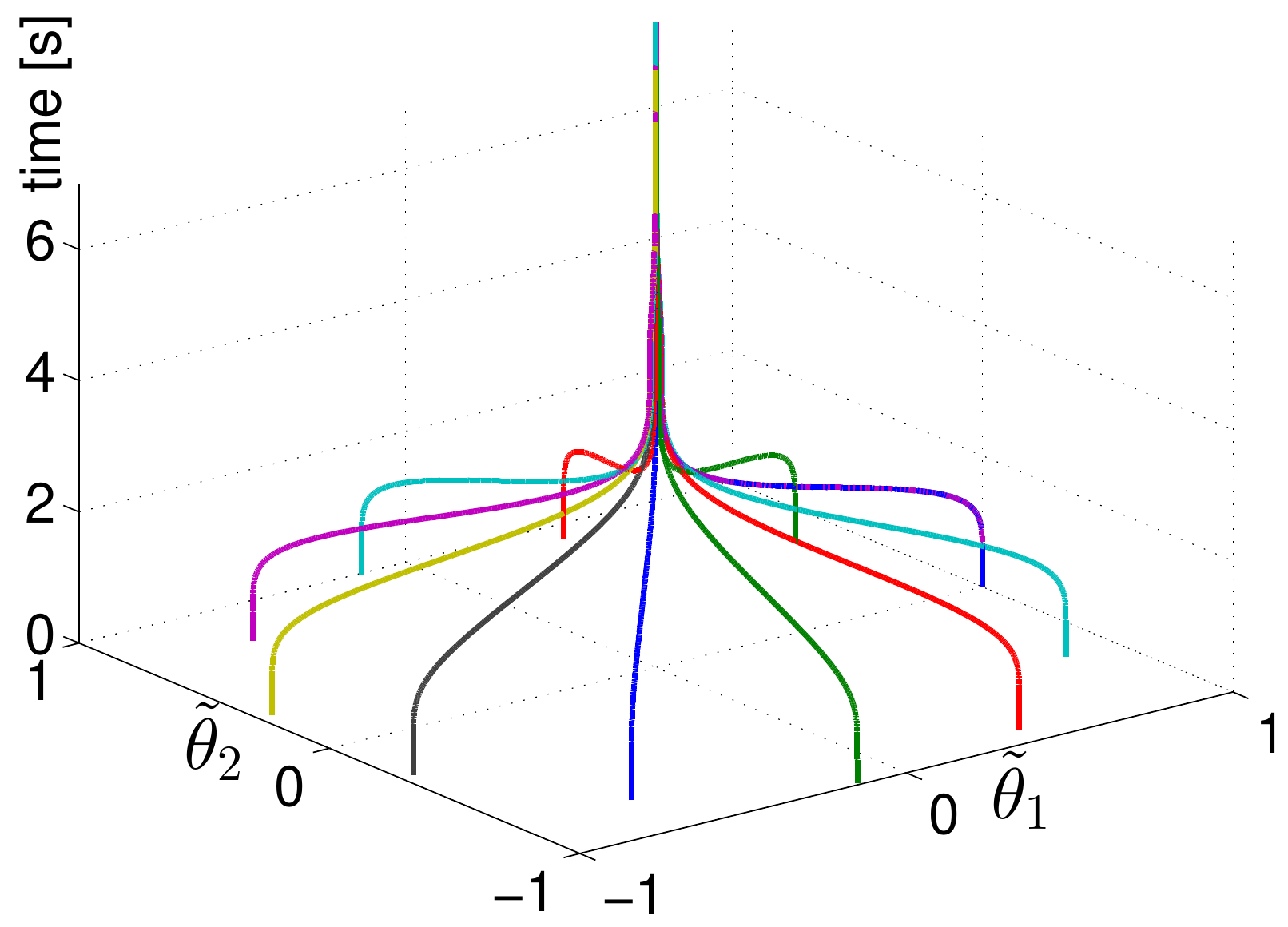}
	\caption{The integral curves of the DREM parameter estimator \eqref{decest} with $m(t)$ given by \eqref{eq:reg_m}, $\gamma_{1,2}=3$ and initial conditions taken in a disk.}
	\label{fig:drgm_IC}
\end{figure}

%%%%%%%%%%%%%%%%%%%%%
\section{Parameter Estimation of ``Partially" Monotonic Regressions}
\lab{sec3}
%%%%%%%%%%%%%%%%%%%%%%%%%%%%
%
In this section we propose to use the DREM technique for {\em nonlinearly} parameterised regressions with {factorisable} nonlinearities, where {\em some}---but not all---of the functions verify a monotonicity condition. The main objective is to generate a new regressor that contains only these ``good' nonlinearities. 

We consider factorisable regressions of the form
\begequ
\lab{facmod}
\bfy(t) = \bfm(t) \psi(\theta),
\endequ
where $\bfy:\rea_+ \to \rea^{n}$ and  $\bfm:\rea_+ \to \rea^{n \times p}$ contain measurable functions, the mapping $\psi:\rea^q \to  \rea^p$ is known and $\theta \in \rea^q$ is the {\em unknown} parameter vector.

For the sake of ease of exposition before presenting a general result, which is postponed to  Subsection \ref{subsec33}, in the next two subsections we apply the DREM technique to two representative examples. It is clear that the nonlinear regression \eqref{facmod} can be ``transformed" into a linear one  defining the vector $\eta:=\psi(\theta)$ to which the standard gradient estimator
\begequ
\lab{dothateta}
\dot {\hat \eta}=\Gamma \bfm^\top (\bfy - \bfm \hat \eta)
\endequ
can be  applied. In Subsection \ref{subsec34} a brief discussion on the disadvantages of overparameterisation is presented. To extend the realm of application of the DREM technique in Subsection \ref{subsec35} we consider the case of nonlinear dynamical systems that depend nonlinearly on unknown parameters---via non--factorisable nonlinearities. A procedure to generate  a factorisable regression of the form \eqref{facmod}, which is valid in a neighbourhood of the systems operating point, is given.  
\subsection{First example}
\lab{subsec31}
%%%%%%%%%%%%%%%%%%%%%%%%%%%%
%
To illustrate the use of DREM for nonlinearly parameterised factorisable regressions let us first consider the simplest scalar case of $n=1$, $p=2$ and $q=1$. The regression \eqref{facmod} becomes
\begequ
\lab{simreg}
	\bfy(t) = \begin{bmatrix}\bfm_1(t) & \bfm_2(t)\end{bmatrix}  \begin{bmatrix}\psi_1(\theta) \\ \psi_2(\theta) \end{bmatrix},
\endequ
where $\bfy:\rea_+ \to \rea$,  $\bfm_i:\rea_+ \to \rea$ and  $\psi_i:\rea \to \rea$, for $i=1,2$. Assume that $\psi_1(\theta)$ is {\em strongly monotonically increasing}, that is,
$$
\psi'_1(\theta) \geq \rho_0 > 0.
$$ 
In this case, the function $\psi_1$ verifies \cite{DEM} 
\begequ
\lab{moncon}
(a-b)[\psi_1(a)-\psi_1(b)] \geq \rho_1(a-b)^2,\quad \forall a,b \in \rea,
\endequ
for some $\rho_1>0$. 

The goal is to generate a new regression where, similarly to \cite{LIUTAC,LIUSCL,TYUetal}, the property \eqref{moncon} can be exploited. Following the DREM procedure described in Subsection \ref{subsec22} we apply an operator $H$ to \eqref{simreg} and pile--up the two regressions as 
\[
	\begin{bmatrix} \bfy(t) \\ \bfy_f(t) \end{bmatrix} = \begin{bmatrix} \bfm_1(t) & \bfm_2(t) \\ \bfm_{1f}(t) & \bfm_{2f}(t)\end{bmatrix} \begin{bmatrix}\psi_1(\theta) \\ \psi_2(\theta) \end{bmatrix}.
\]
Multiplying on the left the equation above by the row vector $[\bfm_{2f} \ -\bfm_2]$ we get the desired regression involving only $\psi_1$, namely,
$$
\bfY(t)=\Phi(t)\psi_1(\theta),
$$
where we defined the signals
\begequarr
\nonumber
\bfY & := & \bfm_{2f} \bfy - \bfm_2 \bfy_f\\
\lab{yphi}
\Phi & := & \bfm_{2f} \bfm_1 - \bfm_2 \bfm_{1f}.
\endequarr
The estimator
\begequ
\lab{estthe0}
	\dot{\hat{\theta}} = \gamma \Phi [\bfY- \Phi \psi_1(\hat\theta)],
\endequ
with $\gamma>0$, yields
\[
	\dot{\tilde{\theta}} = -\gamma \Phi^2 [\psi_1(\hat\theta) - \psi_1(\theta)].
\]
To analyse the stability of this error equation consider the Lyapunov function candidate
$$
V(\tilde \theta) = \frac{1}{2 \gamma} \tilde \theta^2,
$$
whose derivative yields
\begequarrs
\dot V  & = & - \Phi^2( \hat \theta - \theta)[ \psi_1(\hat\theta) - \psi_1(\theta)] \\
&  \leq &  - \rho_1 \Phi^2 (\hat \theta-\theta)^2 \\
&  = &  - 2\rho_1 \gamma\Phi^2 V,
\endequarrs
where the first inequality follows from \eqref{moncon}. Integrating the previous inequality yields
$$
V(t) \leq e^{- 2\rho_1\gamma \int_0^t \Phi^2(s)ds}V(0),
$$ 
which ensures that $ \tilde \theta(t) \to 0$ as $t \to \infty$ if $\Phi(t) \notin \call_2$.

As an example consider the regression 
\begequarrs
\bfy(t) & = & \bfm_1(t)\left(\theta - e^{-\theta} \right) + \bfm_2(t) \cos(\theta)\\
& = & \lef[{ccc} \bfm_1(t) & \; &  \bfm_2(t) \rig] \lef[{c} \theta - e^{-\theta} \\ \cos(\theta) \rig]\\
& =: & \bfm(t) \psi(\theta),
\endequarrs
which clearly satisfies condition \eqref{moncon}. The vector $\bfm(t) \notin \mbox{PE}$ stymies the application of overparameterisation. Moreover, since the mapping $\psi(\theta)$ is only locally injective, a constrained estimator of $\eta$ is required to recover the parameter $\theta$.
 
The proposition below identifies a class of regressors  $\bfm(t) \notin \mbox{PE}$ but $\Phi(t) \not \in \mathcal{L}_2$ for a simple delay operator.

%
%%%%%%%%%%%%%%
\begin{proposition}\em
\lab{pro3} 
The regressor
$$
\bfm(t) =\lef[{ccc} \frac{\sin(t)}{\sqrt{t+2\pi}} & \ & 1 \rig] \notin \mbox{PE}.
$$
Let the operator $H$ be the delay operator, that is,
$$
 (\cdot)_f (t)= (\cdot)(t-d),\quad d \in \left[\frac{\pi}{2}, \ \frac{3\pi}{2}\right].
$$ 
The function $\Phi$ defined in \eqref{yphi} verifies $\Phi(t) \not \in \mathcal{L}_2$. 
\end{proposition}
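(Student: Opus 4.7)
My plan is to split the statement into its two independent parts. For the non--PE claim, the key observation is that $\bfm_1(t)=\sin(t)/\sqrt{t+2\pi}\to 0$, so for any fixed $T>0$,
\[
\int_t^{t+T}\frac{\sin^2 s}{s+2\pi}\,ds \;\le\; \ln\frac{t+T+2\pi}{t+2\pi}\longrightarrow 0 \quad (t\to\infty).
\]
The left--hand side is the $(1,1)$ entry of $\int_t^{t+T}\bfm\bfm^\top\,ds$ and therefore upper bounds its smallest eigenvalue, so condition \eqref{pe} cannot hold and $\bfm(t)\notin\mbox{PE}$.

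For the claim $\Phi\notin\call_2$, I would exploit the fact that $\bfm_2\equiv 1$ forces $\bfm_{2f}\equiv 1$, collapsing \eqref{yphi} to a simple delay difference,
\[
\Phi(t)=\bfm_1(t)-\bfm_1(t-d)=\frac{\sin t}{\sqrt{t+2\pi}}-\frac{\sin(t-d)}{\sqrt{t-d+2\pi}}.
\]
The natural next step is to add and subtract $\sin(t-d)/\sqrt{t+2\pi}$, obtaining the decomposition $\Phi=A+B$ with
\[
A(t):=\frac{\sin t-\sin(t-d)}{\sqrt{t+2\pi}}=\frac{2\sin(d/2)\cos(t-d/2)}{\sqrt{t+2\pi}},\quad B(t):=\sin(t-d)\Bigl(\tfrac{1}{\sqrt{t+2\pi}}-\tfrac{1}{\sqrt{t-d+2\pi}}\Bigr),
\]
where the first equality uses the sum--to--product identity. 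The mean value theorem applied to $x\mapsto 1/\sqrt{x}$ gives $|B(t)|=O(t^{-3/2})$, hence $B\in\call_2$.

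It remains to show $A\notin\call_2$. Expanding via $2\cos^2(x)=1+\cos(2x)$ yields
\[
A^2(t)=\frac{2\sin^2(d/2)}{t+2\pi}+\frac{2\sin^2(d/2)\cos(2t-d)}{t+2\pi},
\]
and for $d\in[\pi/2,3\pi/2]$ the bound $\sin^2(d/2)\ge 1/2$ ensures that the first term's integral diverges logarithmically. The second integrates conditionally by a standard integration by parts on $\cos(2t-d)/(t+2\pi)$---this is the only nonroutine step, though the argument is classical. Combining with $B\in\call_2$ via the reverse triangle inequality in $\call_2[0,T]$,
\[
\Bigl(\int_0^T A^2\Bigr)^{\!1/2}\le \Bigl(\int_0^T \Phi^2\Bigr)^{\!1/2}+\Bigl(\int_0^T B^2\Bigr)^{\!1/2},
\]
forces $\int_0^T\Phi^2\,ds\to\infty$, i.e.\ $\Phi\notin\call_2$. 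The hypothesis $d\in[\pi/2,3\pi/2]$ is used solely to prevent the dominant $O(1/\sqrt{t})$ envelope of $A$ from collapsing through cancellation.
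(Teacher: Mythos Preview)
Your argument is correct, but the route for $\Phi\notin\call_2$ differs from the paper's. The paper squares $\Phi$ directly and expands the cross term via $\sin(t-d)=\sin t\cos d-\cos t\sin d$, obtaining
\[
\Phi^2(t)=\frac{\sin^2 t}{t+2\pi}+\frac{\sin^2(t-d)}{t+2\pi-d}-2\cos(d)\,\frac{\sin^2 t}{\sqrt{(t+2\pi)(t+2\pi-d)}}+\sin(d)\,\frac{\sin 2t}{\sqrt{(t+2\pi)(t+2\pi-d)}}.
\]
It then uses $\cos(d)\le 0$ on $[\pi/2,3\pi/2]$ so that the first three terms are each non--negative with divergent integral, while the oscillatory fourth term is integrable. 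Your approach instead splits $\Phi=A+B$ by forcing a common denominator, discards $B=O(t^{-3/2})\in\call_2$ as a perturbation, and handles $A$ cleanly via sum--to--product. The paper's route is more direct (no auxiliary decomposition, no triangle inequality in $\call_2$) but \emph{needs} the sign condition $\cos(d)\le 0$ to prevent cancellation among three separately divergent terms---this is exactly why the interval $[\pi/2,3\pi/2]$ appears in the statement. Your decomposition isolates the mechanism more transparently: divergence is driven solely by the factor $\sin(d/2)\ne 0$, so your proof in fact establishes $\Phi\notin\call_2$ for every $d>0$ with $d\notin 2\pi\mathbb Z$, a strictly wider range than the proposition claims.
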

\begin{proof}
The fact that $\bfm(t)  \notin \mbox{PE}$ is obvious because $\bfm_1(t) \to 0$.

Now, the function $\Phi$ defined in \eqref{yphi} takes the form
\begequarrs
\Phi(t) & = &\frac{\sin(t)}{\sqrt{t+2\pi}} - \frac{\sin(t -d)}{\sqrt{t+2\pi - d}}.
\endequarrs
Whence
\[
	\begin{aligned}
		\Phi^2(t) &= \left( \frac{\sin(t)}{\sqrt{t+2\pi}} - \frac{\sin(t -d)}{\sqrt{t+2\pi - d}} \right)^2 \\
							&= \frac{\sin^2(t)}{{t+2\pi}} +  \frac{\sin^2(t -d)}{t+2\pi - d} 
								- 2 \frac {\sin(t)\sin(t-d)} {\sqrt{t+2\pi}\sqrt{t+2\pi - d}}\\
								& = \frac{\sin^2(t)}{{t+2\pi}} +  \frac{\sin^2(t -d)}{t+2\pi - d} \\
								& -2\cos(d)\frac{\sin^2(t)}{\sqrt{t+2\pi}\sqrt{t+2\pi - d}} \\
								& + \sin(d)\frac{\sin(2t)}{\sqrt{t+2\pi}\sqrt{t+2\pi - d}},
	\end{aligned}
\]
where some basic trigonometric identities have been used to derive the third identity. Note that the first three right hand terms of the last identity are not integrable. Since $\cos(d)\le 0$ in the admissible range of $d$ the sum of these terms is also not integrable. On the other hand, the last right term verifies  
$$
 \sin(d)\int_{0}^\infty{\frac{\sin(2t)}{\sqrt{t+2\pi}\sqrt{t+2\pi - d}}dt}< \infty.
$$
Thus, $\Phi(t) \not \in \mathcal{L}_2$.
\end{proof}

Simulations of the overparametrized estimator \eqref{dothateta} with $\theta=1$ are given in Fig. \ref{fig:nlp2_classic_transients}. The simulations exhibit poor convergence, which is expected since $m(t) \not \in \mbox{PE}$. After some trial--and--error gains tuning it is possible to improve the transient performance as shown in Fig. \ref{fig:nlp2_classic_transients}(b). However, as pointed out above, it is not possible to reconstruct $\theta$ since the function $\cos(\theta)$ is not injective for $\theta \in \mathbb{R}$.

\begin{figure}[Htb]
	\centering
\subcaptionbox{$\Gamma=\begin{bmatrix}3 & 0 \\ 0 & 3 \end{bmatrix}$}{\includegraphics[width=0.45\textwidth]{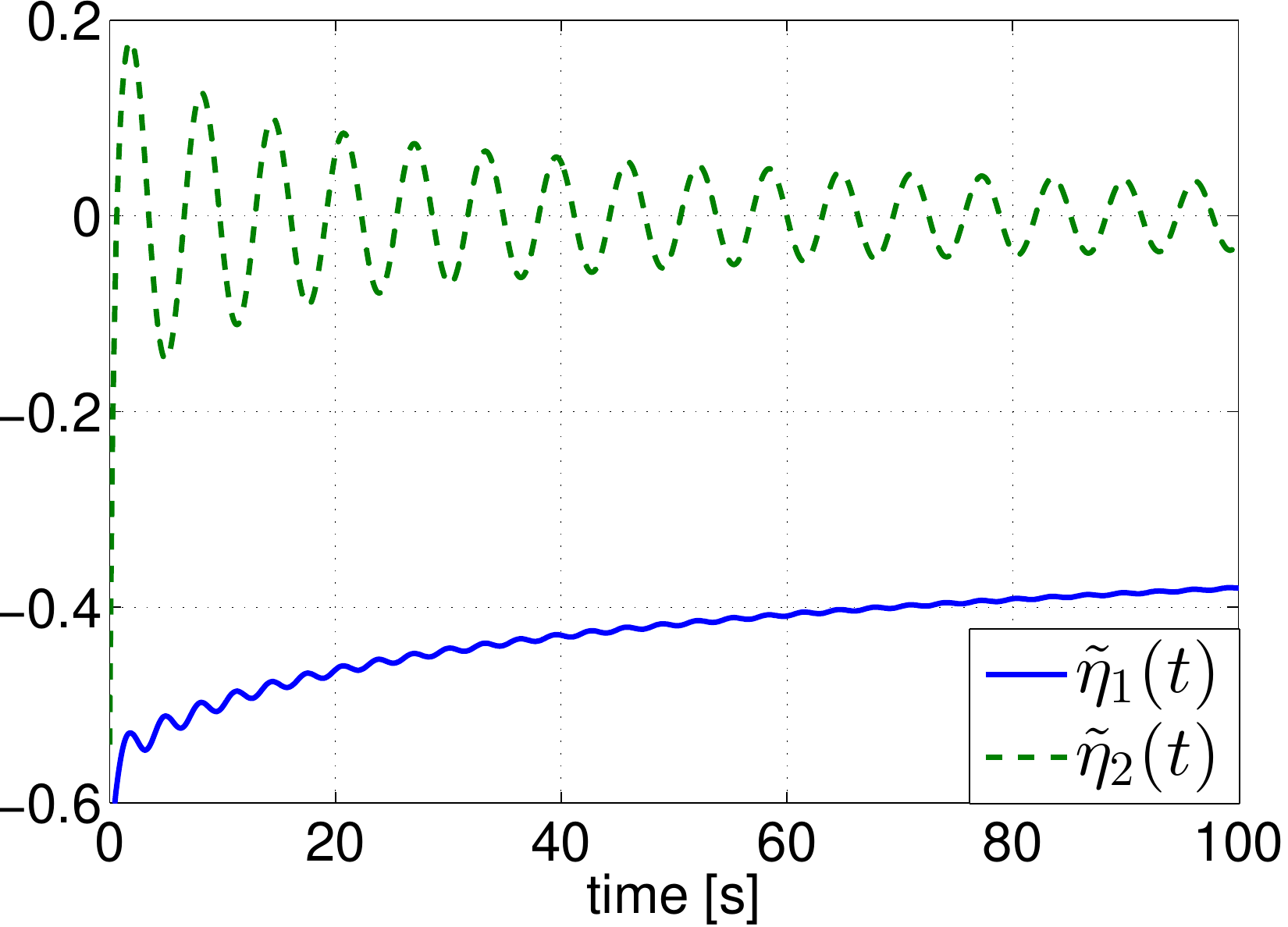}}
\subcaptionbox{$\Gamma=\begin{bmatrix}50 & 0 \\ 0 & 5 \end{bmatrix}$}{\includegraphics[width=0.45\textwidth]{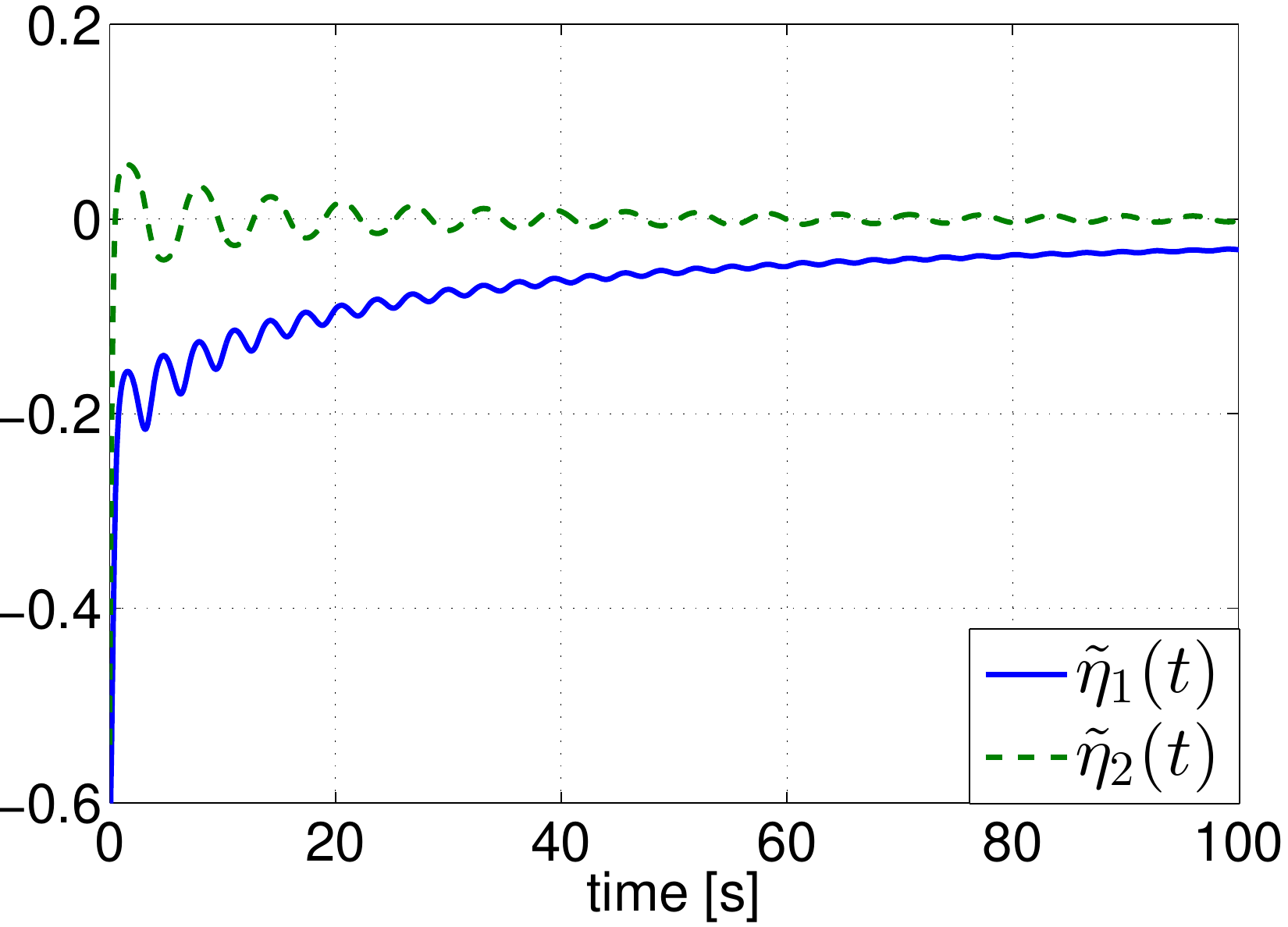}}
	\caption{Transient behaviour of the errors $\tilde \eta(t)$ for the over--parameterized parameter estimator \eqref{dothateta} for different adaptation gains; $\hat\eta(0)=0$ and $\tilde \eta(0) = -\psi(\theta) = -\psi(1) \approx [-0.63, \ -0.54]^\top$}.
	\label{fig:nlp2_classic_transients}
\end{figure}

Simulations of the DREM estimator \eqref{estthe0} with $\theta=1$ are shown in Fig. \ref{fig:nlp2_drem}. The estimation error $\tilde \theta(t)$ converges to zero and the tuning gain $\gamma$ allows to accelerate the convergence.

\begin{figure}[Htb]
	\centering
	\includegraphics[width=0.45\textwidth]{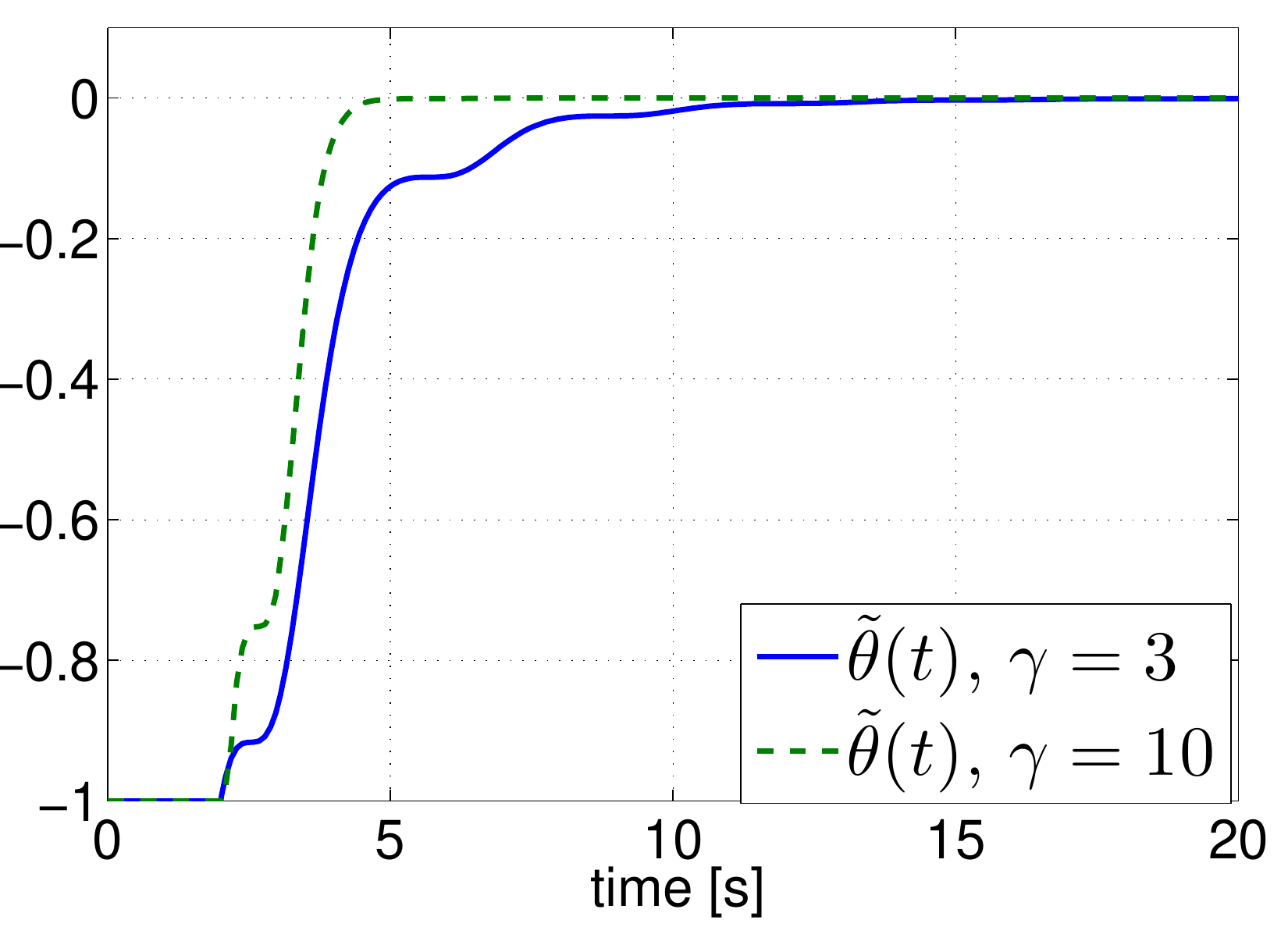}
	\caption{Transient behaviour of the error $\tilde \theta(t)$ transients of the DREM parameters estimator \eqref{estthe0} with different gains; $\tilde \theta(0)=-1$.}
	\label{fig:nlp2_drem}
\end{figure}

\begrem
If the strong monotonicity condition \eqref{moncon} is relaxed to {\em strict} (or plain)  monotonicity 
we can only ensure that $\dot V = - \Phi^2( \hat \theta - \theta)[ \psi_1(\hat\theta) - \psi_1(\theta)]\le 0$
and strict (PE--like) conditions must be imposed on $\Phi(t)$ to ensure convergence. 
\endrem

\begrem
Obviously, DREM is also applicable if $\psi_1$ is monotonically decreasing, instead of increasing, or if it is $\psi_2$ that enjoys the monotonicity property. 
\endrem

\subsection{Second example}
\lab{subsec32}
%%%%%%%%%%%%%%%%%%%%%%%%%%%%
%
Let us consider now the vector case with $n=2$, $p=3$ and $q=2$. The regression \eqref{facmod} becomes
\begequ
\lab{facmod1}
	\begin{bmatrix} \bfy_1(t) \\ \bfy_2(t) \end{bmatrix} = \begin{bmatrix} \bfm_{11}(t) & \bfm_{12}(t) & \bfm_{13}(t) \\ \bfm_{21}(t) & \bfm_{22}(t) & \bfm_{23}(t) \end{bmatrix} \begin{bmatrix}\psi_1(\theta) \\ \psi_2(\theta) \\\psi_3(\theta) \end{bmatrix}.
\endequ
where $\bfy_i:\rea_+ \to \rea$,  $\bfm_{ij}:\rea_+ \to \rea$ and  $\psi_j:\rea^2 \to \rea$, for $i=1,2$ and $j=1,2,3$. To estimate two parameters we assume that two functions $\psi_j$ satisfy a monotonicity condition---without loss of generality, we assume they are $\psi_2$ and $\psi_3$, and we define the ``good" vector 
$$
\psi_g:=\col(\psi_2,\psi_3).
$$
More precisely, we assume there exists a constant positive definite matrix $P \in \rea^{2 \times 2}$ such that 
\begequ
\lab{demcon}
P \nabla \psi_g(\theta) +  [\nabla \psi_g(\theta)]^\top P \geq  \rho_0 I_2>0.
\endequ

As shown in \cite{DEM}, see also \cite{LIUTAC}, the inequality above implies the existence of $\rho_1>0$ such that 
\begequ
\lab{moncon1}
	\left(a-b \right)^\top P \left[ \psi_g(a) - \psi_g(b) \right] \geq  \rho_1 |a - b|^2,\; \forall a,b\in \rea^q,
\endequ
which is the strong $P$--monotonicity property exploited in  \cite{LIUSCL}. 

The first task of DREM is to generate a two--dimensional regression that does not contain $\psi_1$. First we apply a filter to either one of the rows of \eqref{facmod1}, say to the first one, and pile--up the result as
\[
\begin{bmatrix} \bfy_1(t) \\ \bfy_2(t) \\ \bfy_{1f}(t) \end{bmatrix} = \begin{bmatrix} 
			\bfm_{11}(t) & \bfm_{12}(t) & \bfm_{13}(t) \\ 
			\bfm_{21}(t) & \bfm_{22}(t) & \bfm_{23}(t) \\
			\bfm_{11f}(t) & \bfm_{12f}(t) & \bfm_{13f}(t) \end{bmatrix} \begin{bmatrix}\psi_1(\theta) \\ \psi_2(\theta) \\\psi_3(\theta) \end{bmatrix}.
\]
Next we multiply the last equation on the left by the matrix
\[
	\begin{bmatrix} \bfm_{21}(t) & -\bfm_{11}(t) & 0 \\ \bfm_{11f}(t) & 0 & -\bfm_{11}(t)\end{bmatrix},
\]
which is the left annihilator of the first column of the extended regressor. Defining the $2 \times 2$ regression matrix
$$
\Phi:= \begin{bmatrix} \bfm_{21} & -\bfm_{11} & 0 \\ \bfm_{11f} & 0 & -\bfm_{11}\end{bmatrix}
	\begin{bmatrix} 
			\bfm_{12} & \bfm_{13} \\ 
			\bfm_{22} & \bfm_{23} \\
			\bfm_{12f} & \bfm_{13f} \end{bmatrix},
$$
and 
$$
\bfY_1:=\begin{bmatrix} \bfm_{21} & -\bfm_{11} & 0 \\ \bfm_{11f} & 0 & -\bfm_{11}\end{bmatrix}
									\begin{bmatrix}\bfy_1 \\ \bfy_2 \\ \bfy_{1f}\end{bmatrix},
$$ 
we get  
\begequ \lab{eq:Y1}
	\bfY_1(t)= \Phi(t) \psi_g(\theta).
\endequ
Multiplying \eqref{eq:Y1} on the left by $\adj\{\Phi(t)\}$ and defining $\bfY(t):=\adj\{\Phi(t)\}\bfY_1(t)$, we obtain the desired regression form
\begequ
\lab{regfor2}
\bfY(t)= \det\{\Phi(t)\} \psi_g(\theta).
\endequ

We propose the estimator
\begequ
\lab{estthe}
	\dot{\hat{\theta}} = \det\{\Phi\} \Gamma P [\bfY- \det\{\Phi\} \psi_g(\hat\theta)],
\endequ
with $\Gamma \in \rea^{2 \times 2}$ a positive definite gain matrix. Using \eqref{regfor2} the error equation is
\[
	\dot{\tilde{\theta}} = -{\det}^2\{\Phi\} \Gamma P [\psi_g(\hat\theta) - \psi_g(\theta)].
\]
To analyse its stability define the Lyapunov function candidate
\begequ
\lab{lyafun}
V(\tilde \theta) = \frac{1}{2} \tilde \theta^\top \Gamma^{-1} \tilde \theta,
\endequ
whose derivative yields
\begequarrs
\dot V  & = & - {\det}^2\{\Phi\} ( \hat \theta - \theta)^\top P [ \psi_g(\hat\theta) - \psi_g(\theta)] \\
& \leq & - {\det}^2\{\Phi\} \frac{2 \rho_1}{\lambda_{\max}\{\Gamma\}}V.
\endequarrs
If the matrix $\Phi(t)$ is full rank and ${\det}^2\{\Phi(t)\}\ge \kappa>0$, then
\[
	\dot V  \leq - \frac{2 \kappa \rho_1}{\lambda_{\max}\{\Gamma\}}V,
\]
and the analysis above ensures {\em exponential} stability of the error equation.

Otherwise integrating the inequality yields 
$$
V(t) \leq e^{-  \frac{2 \rho_1}{\lambda_{\max}\{\Gamma\}} \int_0^t \det^2\{\Phi(s)\}ds}V(0),
$$ 
which ensures that $ \tilde \theta(t) \to 0$ as $t \to \infty$ if  $\det\{\Phi(t)\} \notin \call_2$.

\begrem
From the derivations above it is clear that there are many degrees of freedom for the definition of the regressor matrix $\Phi$, {\em e.g.}, the choice of the operator $H$, the rows to be filtered and mixed. The final objective is to impose some properties to the function $\det\{\Phi(t)\}$, ideally, that it is (uniformly) bounded away from zero---that yields exponential stability.
\endrem

\begrem
In \cite{LIUSCL} it is shown that the {\em local} verification of the monotonicity condition \eqref{demcon} reduces to a linear matrix inequality (LMI) test provided some prior knowledge on the parameters is available. More precisely, assume $\theta \in \Theta \subset \rea^q$, with
$$
\Theta :=\{ \theta \in \rea^q\;|\; \theta_i \in [\theta_i^m,\theta_i^M] \subset \rea\}.
$$
The {\em quadratic approximation} of the  mapping $\psi_g(\theta)$ verifies (\ref{demcon}) if and only if the LMI
$$
P \nabla \psi_g(v_i) + [\nabla \psi_g(v_i)]^\top P >  0,\quad i =1,\dots,(2^q)^q
$$
is feasible, where the vectors $v_i \in \Theta$ are computable from the vertices of $\Theta$. 
\endrem
\subsection{A general result}
\lab{subsec33}
%%%%%%%%%%%%%%%%%%%%%%%%%%%%
%
In this subsection a generalization of the previous examples is presented. We make the following assumption.

\begin{assumption}\em
\lab{ass1}
Consider the regression form \eqref{facmod}. There are $q$ functions
$\psi_i$ that, reordering the outputs $y_i$, we arrange in a vector $\psi_g:\rea^q \to \rea^q$, verifying  
$$
P \nabla \psi_g(\theta) +  [\nabla \psi_g(\theta)]^\top P \geq  \rho_0 I_q>0,
$$ 
for some  positive definite matrix $P \in \rea^{q \times q}$.
\end{assumption}   

Consistent with Assumption \ref{ass1} we rewrite  \eqref{facmod} as 
\begequ
\lab{yn}
\bfy_N(t)  = \begin{bmatrix} \bfm_{g}(t) & \bfm_{b}(t) \end{bmatrix} \begin{bmatrix}\psi_g(\theta) \\ \psi_b(\theta) \end{bmatrix},
\endequ
where $\bfy_N:\rea_+ \to \rea^n$ is the reordered output vector,  $\bfm_{g}:\rea_+ \to \rea^{n \times q}$,   $\bfm_{b}:\rea_+ \to \rea^{n \times (p-q)}$, $\psi_g:\rea^q \to \rea^q$ and  $\psi_b:\rea^q \to \rea^{p-q}$. 

From the previous two examples we have learned that DREM must accomplish two tasks, on one hand, generate a regression without $\bfm_b$. On the other hand, to be able to relax the PE condition, the new regressor matrix should be square (or tall). Given these tasks, to obtain a sensible problem formulation the following assumption is imposed.

\begin{assumption} \em
\lab{ass2}
The regression \eqref{yn} satisfies
\begequarr
\lab{qlesp}
q & < & p \\
\lab{nlesp}
n & < & p.
\endequarr
\end{assumption}

If \eqref{qlesp} does not hold {\em all} functions $\psi_i,\;i=1,\dots,p$, satisfy the monotonicity condition and there is no need to eliminate any one of them. On the other hand, if \eqref{nlesp} is not satisfied a square regressor without the ``bad" part of the regressor $\psi_b$ can be created without the introduction of the operators $H_i$. Indeed, if $n = p$ the matrix $\bfm_b$ is tall and it admits a full--rank left annihilator $\bfm_b^\perp:\rea_+ \to \rea^{q \times n}$. Moreover, the new regressor matrix $\bfm_b^\perp \bfm_g$ is square. A similar situation arises if $n > p$.   

Following DREM we introduce $n_f$ operators, apply them to some rows of \eqref{yn} and pile all the regression forms to get
\begin{equation} \label{eq:ExtReg}
	\lef[{c} \bfy_N \\ \bfy_{Nf} \rig]  =  \begin{bmatrix} \bfM_{g} & \bfM_{b} \end{bmatrix} \begin{bmatrix}\psi_g(\theta) \\ \psi_b(\theta) \end{bmatrix}.
\end{equation}
where we defined the matrices  $\bfM_{g}:\rea_+ \to \rea^{(n+n_f) \times q}$,   $\bfM_{b}:\rea_+ \to \rea^{(n+n_f) \times (p-q)}$
\begequ
\lab{bfmat} 
\bfM_{g}:=  \begin{bmatrix} \bfm_{g} \\  \bfm_{gf}  \end{bmatrix},\; \bfM_{b}:=\begin{bmatrix}  \bfm_{b}\\  \bfm_{bf} \end{bmatrix}.
\endequ
To select the number $n_f$ of operators we notice that the matrix to be eliminated, that is $\bfM_b$,
is of dimension $(n+n_f) \times (p-q)$. Therefore, to have a left annihilator for it with $q$ rows, which is needed to make the new regressor square, we must fix $n_f=p-n$. Define
\begin{equation} \label{eq:Phi}
	\Phi := \bfM_b^\perp  \bfM_{g}.
\end{equation}
Multiplying on the left by $\adj\{\Phi\}\bfM_b^\perp$ the equation \eqref{eq:ExtReg} yields the desired regressor form
$$
\bfY = \det\{\Phi\} \psi_g(\theta),
$$
where
\begequ \lab{bfyphi}
\bfY := \adj\{\Phi\}\bfM_b^\perp \lef[{c} \bfy_N \\ \bfy_{Nf} \rig]
\endequ

We are in position to present the main result of this section, whose proof follows from the derivations above. 

\begin{proposition}\em
\lab{pro4}
Consider the nonlinearly parameterised factorisable regression \eqref{yn} satisfying Assumptions \ref{ass1} and \ref{ass2}. Introduce  $p-n$ linear, $\call_\infty$--stable operators $H_i: \call_\infty \to \call_\infty,\;i \in \{1,2,\dots,p-n\}$ verifying \eqref{defh}. Define the matrices $\bfM_g,\;\bfM_b$ as given in \eqref{bfmat}. Consider the estimator \eqref{estthe} with $\Phi$ and $\bfY$ defined in \eqref{eq:Phi}, \eqref{bfyphi} and $\bfM_b^\perp:\rea_+ \to \rea^{q \times p}$ a full--rank left annihilator of $\bfM_b$. The following implication  holds
$$
\det\{\Phi(t)\} \notin \call_2 \quad \Longrightarrow \quad \lim_{t \to \infty}| \tilde \theta(t)| = 0.
$$ 
Moreover, if ${\det}^2\{\Phi(t)\} \ge \kappa >0$, then $| \tilde \theta(t)|$ tends to $0$ {\em exponentially} fast.
\end{proposition}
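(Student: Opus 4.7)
The plan is to show that the proof is a direct generalisation of the derivations carried out for the $n=2$, $p=3$, $q=2$ case in Subsection \ref{subsec32}, with a preliminary dimensional check to ensure the construction is well-posed under Assumption \ref{ass2}. Since the estimator, the signals $\bfY$, $\Phi$, and the Lyapunov machinery have all been assembled before the statement, the proof amounts to stringing these ingredients together.

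First I would verify the dimensional bookkeeping. By Assumption \ref{ass2} we take $n_f = p-n$, so the extended regressor has $n+n_f=p$ rows. Then $\bfM_b$ is $p\times(p-q)$, which is tall because $q<p$, so it admits a full-rank left annihilator $\bfM_b^\perp:\rea_+ \to \rea^{q\times p}$, and $\Phi = \bfM_b^\perp\bfM_g$ is $q\times q$ square. The identity \eqref{adj} is then applicable to $\Phi$ regardless of its rank.

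Second, I would premultiply the extended regression \eqref{eq:ExtReg} by $\bfM_b^\perp$, annihilating the $\psi_b$ term, and obtain, modulo the exponentially decaying filter transients from \eqref{defh},
\begin{equation*}
\bfM_b^\perp \begin{bmatrix}\bfy_N \\ \bfy_{Nf}\end{bmatrix} = \Phi\,\psi_g(\theta) + \epsilon_t.
\end{equation*}
A further premultiplication by $\adj\{\Phi\}$, together with \eqref{adj}, delivers $\bfY = \det\{\Phi\}\,\psi_g(\theta) + \epsilon_t$. Substituting into the estimator \eqref{estthe} yields the error equation
\begin{equation*}
\dot{\tilde\theta} = -{\det}^2\{\Phi\}\,\Gamma P\bigl[\psi_g(\hat\theta) - \psi_g(\theta)\bigr] + \epsilon_t.
\end{equation*}
With $V(\tilde\theta)$ as in \eqref{lyafun}, a direct computation using the strong $P$-monotonicity inequality (analogous to \eqref{moncon1}) implied by Assumption \ref{ass1} gives
\begin{equation*}
\dot V \leq -\frac{2\rho_1}{\lambda_{\max}\{\Gamma\}}\,{\det}^2\{\Phi\}\,V + \epsilon_t,
\end{equation*}
and integrating this scalar differential inequality produces both claims: if $\det\{\Phi(t)\}\notin\call_2$ then $V(t)\to 0$, and if $\det^2\{\Phi(t)\}\geq\kappa>0$ then $V$ decays exponentially at rate $\kappa\rho_1/\lambda_{\max}\{\Gamma\}$.

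The step I expect to require the most care is the rigorous treatment of the $\epsilon_t$ residuals. Since $\bfM_b^\perp$ and $\adj\{\Phi\}$ are only bounded (via the standing boundedness of $\bfm$ and of the filter outputs), multiplication by them preserves but does not improve exponential decay, so the $\epsilon_t$ must be carried all the way to the Lyapunov inequality. To absorb them without affecting the asymptotic conclusions I would invoke Lemma 1 of \cite{ARAetal}, precisely as in Remark \ref{rem2}. A secondary technical point is that $\bfM_b^\perp$ should be constructed so as to remain bounded, which can be guaranteed by a smooth selection from, for instance, a QR-type factorisation of $\bfM_b$; this is standard and warrants at most a brief comment.
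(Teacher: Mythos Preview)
Your proposal is correct and follows exactly the route the paper intends: the paper's own proof is the single line ``whose proof follows from the derivations above,'' and the derivations you reconstruct---the dimensional check yielding $n_f=p-n$, the annihilation of $\psi_b$ via $\bfM_b^\perp$, the adjugate trick \eqref{adj}, and the Lyapunov analysis with \eqref{lyafun} and the $P$-monotonicity bound---are precisely those of Subsection \ref{subsec33} combined with the stability argument of Subsection \ref{subsec32}. Your explicit tracking of the $\epsilon_t$ terms and the remark on bounded construction of $\bfM_b^\perp$ go slightly beyond what the paper spells out but are in the same spirit; one minor slip is that the remark you cite for handling $\epsilon_t$ via Lemma 1 of \cite{ARAetal} is the unlabeled remark following Remark \ref{rem2}, not Remark \ref{rem2} itself.
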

\subsection{Discussion on overparameterisation}
\lab{subsec34}
%%%%%%%%%%%%%%%%%%%%%%%%%%%%
%
As disused above the nonlinear regression \eqref{facmod} can be ``transformed" into a linear one  defining the vector $\eta=\psi(\theta)$. Classical
parameter adaptation algorithms can then be used to estimate this new parameter vector.  However, overparametrization suffers from the following well--known shortcomings
\cite{LJU,SASBOD}:
\begite
\item[(i)] Performance degradation, {\em e.g.}, slower convergence, due to the need of a search in a bigger parameter space---indeed, usually $q < p$.
\item[(ii)] The more stringent conditions imposed on the reference signals to ensure the PE needed for convergence of the parameters $\eta$.
\item[(iii)] Inability to recover the true parameter $\theta$---except for injecting mappings. This stymies the application of this approach in situations where the actual parameters $\theta$ are needed.
\item[(iv)] Conservativeness introduced when incorporating prior knowledge in restricted parameter estimation.
\item[(v)] Reduction of the domain of validity of the estimates stemming from the, in general only local, invertibility of the
mappings $\psi$.
\endite

\subsection{Generating separable regressions for nonlinear dynamical systems}
\lab{subsec35}
%%%%%%%%%%%%%%%%%%%%%%%%%%%%
%
Consider the general case of a nonlinear system whose dynamics depend nonlinearly on some unknown parameters, say,
\begequ
\lab{dotx}
\dot x = F_0(x)+F_1(x,\theta),
\endequ
where $F_0:\rea^n  \to \rea^n$, $F_1:\rea^n \times \rea^q \to \rea^n$ are known functions, $x:\rea_+ \to \rea^n$  is the measurable systems state and $\theta \in \rea^q$ are the constant unknown parameters. It is assumed that the system evolves around some operating point $x_* \in \rea^n$. 

Applying the classical filtering technique \cite{MIDetal} it is possible to generate a  state--dependent regression that can be used for parameter estimation. Indeed, defining
\begequarrs
\bfy& := & \frac{p}{p+1}x - \frac{1}{p+1}F_0(x)\\
\Xi(x,\theta) & := & \frac{1}{p+1}F_1(x,\theta),
\endequarrs
and neglecting exponentially decaying terms, we get the nonlinearly parameterized regression 
\begequ
\lab{nonlinreg}
\bfy = \Xi(x,\theta)
\endequ
where  $\bfy:\rea_+ \to \rea^n$ and $\Xi:\rea^n \times \rea^q \to \rea^n$. Estimation of the parameters at this level of generality is a daunting task. To simplify it we propose to linearize the function $\Xi$ around $x_*$ to obtain a {\em separable} nonlinear regression as done in the lemma below.

\begin{lemma}\em
\lab{lem1}
The first order approximation (around $x_*$) of the nonlinearly parameterized regression \eqref{nonlinreg} is given by \eqref{facmod}, where $\bfm:\rea_+ \to \rea^{n \times p}$ and $\psi:\rea^q \to  \rea^p$, with $p:=n+n^2$, are given by
\begequarrs
\bfm(t) & := & \lef[{ccc} I_n & | & \lef[{cccc} \tilde x^\top(t) & 0 & \dots & 0 \\  0 & \tilde x^\top(t) &  \dots & 0 \\ \vdots & \vdots & \vdots & \vdots \\ 0 & 0 & \dots & \tilde x^\top(t)\rig]\rig] \\
\psi(\theta)  & := & \lef[{c} \Xi_*(\theta) \\  \nabla_x \Xi_{1*}(\theta) \\ \vdots \\ \nabla_x \Xi_{n*}(\theta)\rig],
\endequarrs
where we have introduced the notation $\tilde x:=x-x_*$.
\end{lemma}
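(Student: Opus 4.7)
The plan is to apply Taylor's theorem componentwise to $\Xi(x,\theta)$ in the state variable around $x_*$ (with $\theta$ held fixed), and then rearrange the resulting affine-in-$\tilde x$ expression into the factorisable product $\bfm(t)\psi(\theta)$ prescribed by \eqref{facmod}.

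First, I would write, for each $i \in \{1,\dots,n\}$,
\[
\Xi_i(x,\theta) \approx \Xi_i(x_*,\theta) + [\nabla_x \Xi_i(x_*,\theta)]^\top (x-x_*),
\]
which, using the star-subscript convention $F_* := F(x_*)$ and the shorthand $\tilde x := x - x_*$, becomes
\[
\Xi_i(x,\theta) \approx \Xi_{i*}(\theta) + \tilde x^\top \nabla_x \Xi_{i*}(\theta).
\]
Substituting into \eqref{nonlinreg} and stacking the $n$ scalar approximations gives
\[
\bfy \approx \Xi_*(\theta) + \col\bigl(\tilde x^\top \nabla_x \Xi_{1*}(\theta),\dots,\tilde x^\top \nabla_x \Xi_{n*}(\theta)\bigr).
\]

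Next, I would factor the right-hand side into a product of a $t$-dependent matrix and a $\theta$-dependent vector. The constant-in-$\tilde x$ block is just $I_n\,\Xi_*(\theta)$. For the linear-in-$\tilde x$ block, observing that its $i$-th entry is the inner product $\tilde x^\top \nabla_x \Xi_{i*}(\theta)$, I would rewrite the whole vector as the product of the block-diagonal matrix $\diag(\tilde x^\top,\dots,\tilde x^\top) \in \rea^{n\times n^2}$ with the column vector obtained by stacking $\nabla_x \Xi_{1*}(\theta),\dots,\nabla_x \Xi_{n*}(\theta)$. Concatenating $I_n$ horizontally with this block-diagonal matrix recovers the claimed $\bfm(t)$, while concatenating $\Xi_*(\theta)$ on top of the stacked gradients recovers the claimed $\psi(\theta)$. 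Hence $\bfy \approx \bfm(t)\psi(\theta)$, which is \eqref{facmod} with $p = n + n^2$.

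A dimension check finishes the argument: $\Xi_*(\theta) \in \rea^n$ and each of the $n$ column blocks $\nabla_x \Xi_{i*}(\theta) \in \rea^n$, so $\psi(\theta) \in \rea^{n+n^2}$, while $\bfm(t)$ is $n \times (n+n^2)$. The only bookkeeping point to watch is the (transposed) Jacobian convention $\nabla_x \Xi = [\nabla \Xi_1,\dots,\nabla \Xi_n]$, which must be respected so that each $\nabla_x \Xi_{i*}(\theta)$ sits as a column in $\psi(\theta)$ and pairs with a $\tilde x^\top$ row of $\bfm(t)$. No real analytical obstacle arises; the lemma is a direct consequence of Taylor's theorem applied in $x$ alone, followed by the matrix identification just described.
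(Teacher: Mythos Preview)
Your proof is correct and follows essentially the same approach as the paper: both take the first-order Taylor expansion of $\Xi(x,\theta)$ in $x$ around $x_*$ and then rearrange the affine-in-$\tilde x$ expression into the factorised product $\bfm(t)\psi(\theta)$. The paper's proof merely states this rearrangement in one sentence, while you spell out the block structure explicitly.
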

\begin{proof}
The proof follows trivially rearranging the elements of the first order approximation (around $x_*$) of $\Xi$, namely
$$
\Xi(x,\theta)= \Xi(x_*,\theta)+\nabla^\top_x\Xi(x_*,\theta)(x - x_*) + \calo (|x - x_*|^2),
$$
and using the definition of $\tilde x$.
\end{proof}
Next, from the derivations of the subsection \ref{subsec33}, if at least $q$ elements of the vector $\psi(\theta)$ enjoy monotonicity, the DREM approach can be applied.
\begrem
From the structure of the regression matrix given in Lemma \ref{lem1} the following equivalence holds true
$$
\bfm(t) \in \mbox{PE} \quad \Longleftrightarrow \quad \lef[{c} 1 \\ \tilde x(t) \rig] \in \mbox{PE}.
$$
Clearly, the PE condition is {\em not satisfied} in regulation tasks, when $\tilde x(t)$ tends to zero.
\endrem

\begrem
We have generated the regressor form  \eqref{nonlinreg} applying the classical filtering technique. As shown in \cite{LIUTAC,LIUSCL} it is possible to design parameter estimators directly for the system \eqref{dotx} using immersion and invariance techniques \cite{ASTKARORT}. Also, we have considered the case of a closed system, that is, without input. The same derivations apply for systems with an external input signal.  
\endrem
%
%%%%%%%%%%%%%%%%%%%%%
\section{Concluding Remarks and Future Research}
\lab{sec4}
%%%%%%%%%%%%%%%%%%%%%%%%%%%%
%
A procedure to generate new regression forms for which we can design parameter estimators with enhanced performance has been proposed. The procedure has been applied to linear regressions yielding new estimators whose parameter convergence can be established without invoking the usual, hardly verifiable, PE condition. Instead, it is required that the new regressor vector is not square integrable, which is different than PE of the original regressor. For nonlinearly parameterised regressions with monotonic nonlinearities the procedure allows to treat cases when only some of the nonlinearities verify this monotonicity condition. Similarly to the case of linear regressions, convergence is ensured if the determinant of the new regressor is not square integrable.

The design procedure includes many degrees of freedom to verify the aforementioned convergence condition. Current research is under way to make more systematic the choice of this degrees of freedom. It seems difficult to achieve this end at the level of generality presented in the paper. Therefore, we are currently considering more ``structured" situations, for instance, when the original regression form comes from classes of physical dynamical systems or for a practical application. Preliminary calculations for the problem of current--voltage characteristic of photovoltaic cells---which depend nonlinearly on some unknown parameters---are encouraging and we hope to be able to report the results soon.  

\bibliographystyle{IEEEtran}
\bibliography{IEEEabrv,BibList}

% Generated by IEEEtran.bst, version: 1.12 (2007/01/11)
\begin{thebibliography}{10}
\providecommand{\url}[1]{#1}
\csname url@samestyle\endcsname
\providecommand{\newblock}{\relax}
\providecommand{\bibinfo}[2]{#2}
\providecommand{\BIBentrySTDinterwordspacing}{\spaceskip=0pt\relax}
\providecommand{\BIBentryALTinterwordstretchfactor}{4}
\providecommand{\BIBentryALTinterwordspacing}{\spaceskip=\fontdimen2\font plus
\BIBentryALTinterwordstretchfactor\fontdimen3\font minus
  \fontdimen4\font\relax}
\providecommand{\BIBforeignlanguage}[2]{{%
\expandafter\ifx\csname l@#1\endcsname\relax
\typeout{** WARNING: IEEEtran.bst: No hyphenation pattern has been}%
\typeout{** loaded for the language `#1'. Using the pattern for}%
\typeout{** the default language instead.}%
\else
\language=\csname l@#1\endcsname
\fi
#2}}
\providecommand{\BIBdecl}{\relax}
\BIBdecl

\bibitem{LJU}
L.~Ljung, \emph{System identification: theory for the user}.\hskip 1em plus
  0.5em minus 0.4em\relax New Jersey: Prentice Hall, 1987.

\bibitem{SASBOD}
S.~Sastry and M.~Bodson, \emph{Adaptive control: stability, convergence and
  robustness}.\hskip 1em plus 0.5em minus 0.4em\relax Courier Dover
  Publications, 2011.

\bibitem{ARAetal}
S.~Aranovskiy, A.~Bobtsov, A.~Pyrkin, R.~Ortega, and A.~Chaillet, ``Flux and
  position observer of permanent magnet synchronous motors with relaxed
  persistency of excitation conditions,'' in \emph{Proc. 1st IFAC Conf. on
  Modelling, Identification and Control of Nonlinear Systems},
  Saint-Petersburg, Russia, Jun. 2015, pp. 311--316.

\bibitem{CHOJOH}
G.~Chowdhary and E.~Johnson, ``Concurrent learning for convergence in adaptive
  control without persistency of excitation,'' in \emph{Proc. 49th {IEEE}
  Conference on Decision and Control}, Atlanta, GA, Dec. 2010, pp. 3674--3679.

\bibitem{EFIFRA}
D.~Efimov and A.~Fradkov, ``Design of impulsive adaptive observers for
  improvement of persistency of excitation,'' \emph{International Journal of
  Adaptive Control and Signal Processing}, vol.~29, no.~6, pp. 765--782, 2015.

\bibitem{JAImsc}
H.~Jain, ``Graduated persistent excitation and steady state margins for
  adaptive systems,'' Master's thesis, Massachusetts Institute of Technology,
  Cambridge, 2007.

\bibitem{MISDAR}
R.~Mishkov and S.~Darmonski, ``Exact parameter estimation without persistent
  excitation in nonlinear adaptive control systems,'' in \emph{Proc. 7th
  National Conference on Process Automation in the Food and Biotechnology
  Industries}, vol.~60, Plovdiv, Bulgaria, Oct. 2013, pp. 100--106.

\bibitem{PANYU}
Y.~Pan, L.~Pan, and H.~Yu, ``Composite learning control with application to
  inverted pendulums,'' \emph{arXiv:1507.07844}, Jul. 2015.

\bibitem{ALCOSP}
A.~Pyrkin, A.~Bobtsov, S.~Kolyubin, A.~Vedyakov, O.~Borisov, V.~Gromov,
  A.~Margun, and D.~Bazylev, ``Fast compensation of unknown multiharmonic
  disturbance for nonlinear plant with input delay,'' in \emph{Proc. 11th IFAC
  InternationalWorkshop on Adaptation and Learning in Control and Signal
  Processing}, Caen, France, Jul. 2013, pp. 546--551.

\bibitem{LIUTAC}
X.~Liu, R.~Ortega, H.~Su, and J.~Chu, ``Immersion and invariance adaptive
  control of nonlinearly parameterized nonlinear systems,'' \emph{{IEEE} Trans.
  Autom. Control}, vol.~55, no.~9, pp. 2209--2214, 2010.

\bibitem{LIUSCL}
------, ``On adaptive control of nonlinearly parameterized nonlinear systems:
  Towards a constructive procedure,'' \emph{Systems \& Control Letters},
  vol.~60, no.~1, pp. 36 -- 43, 2011.

\bibitem{TYUetal}
I.~Tyukin, D.~Prokhorov, and C.~van Leeuwen, ``Adaptation and parameter
  estimation in systems with unstable target dynamics and nonlinear
  parametrization,'' \emph{{IEEE} Trans. Autom. Control}, vol.~52, no.~9, pp.
  1543--1559, 2007.

\bibitem{YUART}
M.~Arteaga and Y.~Tang, ``Adaptive control of robots with an improved transient
  performance,'' \emph{{IEEE} Trans. Autom. Control}, vol.~47, no.~7, pp.
  1198--1202, 2002.

\bibitem{DASHUA}
S.~Dasgupta and Y.-F. Huang, ``Asymptotically convergent modified recursive
  least-squares with data-dependent updating and forgetting factor for systems
  with bounded noise,'' \emph{{IEEE} Trans. Inf. Theory}, vol.~33, no.~3, pp.
  383--392, 1987.

\bibitem{Lancaster}
P.~Lancaster and M.~Tismenetsky, \emph{The theory of matrices: with
  applications}.\hskip 1em plus 0.5em minus 0.4em\relax Academic press, 1985.

\bibitem{DEM}
A.~Pavlov, A.~Pogromsky, N.~van~de Wouw, and H.~Nijmeijer, ``Convergent
  dynamics, a tribute to {B}oris {P}avlovich {D}emidovich,'' \emph{Systems \&
  Control Letters}, vol.~52, no. 3--4, pp. 257 -- 261, 2004.

\bibitem{MIDetal}
R.~Middleton and G.~Goodwin, ``Adaptive computed torque control for rigid link
  manipulators,'' \emph{Systems \& Control Letters}, vol.~10, no.~1, pp. 9--16,
  1988.

\bibitem{ASTKARORT}
A.~Astolfi, D.~Karagiannis, and R.~Ortega, \emph{Nonlinear and adaptive control
  with applications}.\hskip 1em plus 0.5em minus 0.4em\relax Berlin:
  Springer-Verlag, 2007.

\end{thebibliography}

\end{document}